\newtheorem{theorem}{Theorem}[section]
\newtheorem{axiom}[theorem]{Axiom}
\newtheorem{condition}[theorem]{Condition}
\newtheorem{conjecture}[theorem]{Conjecture}
\newtheorem{corollary}[theorem]{Corollary}
\newtheorem{definition}[theorem]{Definition}
\newtheorem{example}[theorem]{Example}
\newtheorem{exercise}[theorem]{Exercise}
\newtheorem{lemma}[theorem]{Lemma}
\newtheorem{proposition}[theorem]{Proposition}
\newtheorem{remark}[theorem]{Remark}
\newenvironment{proof}[1][Proof]{\noindent\textbf{#1.} }{\ \rule{0.5em}{0.5em}}
\chardef\@x10\chardef\@xv60
\def\tcitime{
\def\@time{%
  \@minute\time\@hour\@minute\divide\@hour\@xv
  \ifnum\@hour<\@x 0\fi\the\@hour:%
  \multiply\@hour\@xv\advance\@minute-\@hour
  \ifnum\@minute<\@x 0\fi\the\@minute
  }}%
\def\QCTOpt[#1]#2{%
  \def\QCTOptB{#1}
  \def\QCTOptA{#2}
}
\def\QCTNOpt#1{%
  \def\QCTOptA{#1}
  \let\QCTOptB\empty
}
\def\Qct{%
  \@ifnextchar[{%
    \QCTOpt}{\QCTNOpt}
}
\def\QCBOpt[#1]#2{%
  \def\QCBOptB{#1}
  \def\QCBOptA{#2}
}
\def\QCBNOpt#1{%
  \def\QCBOptA{#1}
  \let\QCBOptB\empty
}
\def\Qcb{%
  \@ifnextchar[{%
    \QCBOpt}{\QCBNOpt}
}
\def\PrepCapArgs{%
  \ifx\QCBOptA\empty
    \ifx\QCTOptA\empty
      {}%
    \else
      \ifx\QCTOptB\empty
        {\QCTOptA}%
      \else
        [\QCTOptB]{\QCTOptA}%
      \fi
    \fi
  \else
    \ifx\QCBOptA\empty
      {}%
    \else
      \ifx\QCBOptB\empty
        {\QCBOptA}%
      \else
        [\QCBOptB]{\QCBOptA}%
      \fi
    \fi
  \fi
}
\def\GRAPHICSPS#1{%
 \ifcase\GRAPHICSTYPE
   \special{ps: #1}%
 \or
   \special{language "PS", include "#1"}%
 \fi
}%
\def\graffile#1#2#3#4{%
    \leavevmode
    \raise -#4 \BOXTHEFRAME{%
        \hbox to #2{\raise #3\hbox to #2{\null #1\hfil}}}%
}%
\def\draftbox#1#2#3#4{%
 \leavevmode\raise -#4 \hbox{%
  \frame{\rlap{\protect\tiny #1}\hbox to #2%
   {\vrule height#3 width\z@ depth\z@\hfil}%
  }%
 }%
}%
\newif\ifwasdraft
\def\GRAPHIC#1#2#3#4#5{%
 \ifnum\draft=\@ne\draftbox{#2}{#3}{#4}{#5}%
  \else\graffile{#1}{#3}{#4}{#5}%
  \fi
 }%
\def\addtoLaTeXparams#1{%
    \edef\LaTeXparams{\LaTeXparams #1}}%
\newif\ifBoxFrame \BoxFramefalse
\newif\ifOverFrame \OverFramefalse
\newif\ifUnderFrame \UnderFramefalse
\def\BOXTHEFRAME#1{%
   \hbox{%
      \ifBoxFrame
         \frame{#1}%
      \else
         {#1}%
      \fi
   }%
}
\def\doFRAMEparams#1{\BoxFramefalse\OverFramefalse\UnderFramefalse\readFRAMEparams#1\end}%
\def\readFRAMEparams#1{%
 \ifx#1\end%
  \let\next=\relax
  \else
  \ifx#1i\dispkind=\z@\fi
  \ifx#1d\dispkind=\@ne\fi
  \ifx#1f\dispkind=\tw@\fi
  \ifx#1t\addtoLaTeXparams{t}\fi
  \ifx#1b\addtoLaTeXparams{b}\fi
  \ifx#1p\addtoLaTeXparams{p}\fi
  \ifx#1h\addtoLaTeXparams{h}\fi
  \ifx#1X\BoxFrametrue\fi
  \ifx#1O\OverFrametrue\fi
  \ifx#1U\UnderFrametrue\fi
  \ifx#1w
    \ifnum\draft=1\wasdrafttrue\else\wasdraftfalse\fi
    \draft=\@ne
  \fi
  \let\next=\readFRAMEparams
  \fi
 \next
 }%
\def\IFRAME#1#2#3#4#5#6{%
      \bgroup
      \let\QCTOptA\empty
      \let\QCTOptB\empty
      \let\QCBOptA\empty
      \let\QCBOptB\empty
      #6%
      \parindent=0pt%
      \leftskip=0pt
      \rightskip=0pt
      \setbox0 = \hbox{\QCBOptA}%
      \@tempdima = #1\relax
      \ifOverFrame
          \typeout{This is not implemented yet}%
          \show\HELP
      \else
         \ifdim\wd0>\@tempdima
            \advance\@tempdima by \@tempdima
            \ifdim\wd0 >\@tempdima
               \textwidth=\@tempdima
               \setbox1 =\vbox{%
                  \noindent\hbox to \@tempdima{\hfill\GRAPHIC{#5}{#4}{#1}{#2}{#3}\hfill}\\%
                  \noindent\hbox to \@tempdima{\parbox[b]{\@tempdima}{\QCBOptA}}%
               }%
               \wd1=\@tempdima
            \else
               \textwidth=\wd0
               \setbox1 =\vbox{%
                 \noindent\hbox to \wd0{\hfill\GRAPHIC{#5}{#4}{#1}{#2}{#3}\hfill}\\%
                 \noindent\hbox{\QCBOptA}%
               }%
               \wd1=\wd0
            \fi
         \else
            \ifdim\wd0>0pt
              \hsize=\@tempdima
              \setbox1 =\vbox{%
                \unskip\GRAPHIC{#5}{#4}{#1}{#2}{0pt}%
                \break
                \unskip\hbox to \@tempdima{\hfill \QCBOptA\hfill}%
              }%
              \wd1=\@tempdima
           \else
              \hsize=\@tempdima
              \setbox1 =\vbox{%
                \unskip\GRAPHIC{#5}{#4}{#1}{#2}{0pt}%
              }%
              \wd1=\@tempdima
           \fi
         \fi
         \@tempdimb=\ht1
         \advance\@tempdimb by \dp1
         \advance\@tempdimb by -#2%
         \advance\@tempdimb by #3%
         \leavevmode
         \raise -\@tempdimb \hbox{\box1}%
      \fi
      \egroup%
}%
\def\DFRAME#1#2#3#4#5{%
 \begin{center}
     \let\QCTOptA\empty
     \let\QCTOptB\empty
     \let\QCBOptA\empty
     \let\QCBOptB\empty
     \ifOverFrame 
        #5\QCTOptA\par
     \fi
     \GRAPHIC{#4}{#3}{#1}{#2}{\z@}
     \ifUnderFrame 
        \nobreak\par #5\QCBOptA
     \fi
 \end{center}%
 }%
\def\FFRAME#1#2#3#4#5#6#7{%
 \begin{figure}[#1]%
  \let\QCTOptA\empty
  \let\QCTOptB\empty
  \let\QCBOptA\empty
  \let\QCBOptB\empty
  \ifOverFrame
    #4
    \ifx\QCTOptA\empty
    \else
      \ifx\QCTOptB\empty
        \caption{\QCTOptA}%
      \else
        \caption[\QCTOptB]{\QCTOptA}%
      \fi
    \fi
    \ifUnderFrame\else
      \label{#5}%
    \fi
  \else
    \UnderFrametrue%
  \fi
  \begin{center}\GRAPHIC{#7}{#6}{#2}{#3}{\z@}\end{center}%
  \ifUnderFrame
    #4
    \ifx\QCBOptA\empty
      \caption{}%
    \else
      \ifx\QCBOptB\empty
        \caption{\QCBOptA}%
      \else
        \caption[\QCBOptB]{\QCBOptA}%
      \fi
    \fi
    \label{#5}%
  \fi
  \end{figure}%
 }%
\def\makeactives{
  \catcode`\"=\active
  \catcode`\;=\active
  \catcode`\:=\active
  \catcode`\'=\active
  \catcode`\~=\active
}
   \gdef\activesoff{%
      \def"{\string"}
      \def;{\string;}
      \def:{\string:}
      \def'{\string'}
      \def~{\string~}
    }
\def\FRAME#1#2#3#4#5#6#7#8{%
 \bgroup
 \@ifundefined{bbl@deactivate}{}{\activesoff}
 \ifnum\draft=\@ne
   \wasdrafttrue
 \else
   \wasdraftfalse%
 \fi
 \def\LaTeXparams{}%
 \dispkind=\z@
 \def\LaTeXparams{}%
 \doFRAMEparams{#1}%
 \ifnum\dispkind=\z@\IFRAME{#2}{#3}{#4}{#7}{#8}{#5}\else
  \ifnum\dispkind=\@ne\DFRAME{#2}{#3}{#7}{#8}{#5}\else
   \ifnum\dispkind=\tw@
    \edef\@tempa{\noexpand\FFRAME{\LaTeXparams}}%
    \@tempa{#2}{#3}{#5}{#6}{#7}{#8}%
    \fi
   \fi
  \fi
  \ifwasdraft\draft=1\else\draft=0\fi{}%
  \egroup
 }%
\def\TEXUX#1{"texux"}
\long\def\QQQ#1#2{%
     \long\expandafter\def\csname#1\endcsname{#2}}%
\long\def\QQA#1#2{}%
\def\QTR#1#2{{\csname#1\endcsname #2}}
\def\EXPAND#1[#2]#3{}%
\def\NOEXPAND#1[#2]#3{}%
\def\LaTeXparent#1{}%
\def\ChildStyles#1{}%
\def\ChildDefaults#1{}%
\def\QTagDef#1#2#3{}%
\def\QQfnmark#1{\footnotemark}
\def\makeatletter\input gnuindex.sty\makeatother\makeindex{\makeatletter\input gnuindex.sty\makeatother\makeindex}%
\def\initial#1{\bigbreak{\raggedright\large\bf #1}\kern 2\p@\penalty3000}}%
 \def\abstract{%
  \if@twocolumn
   \section*{Abstract (Not appropriate in this style!)}%
   \else \small 
   \begin{center}{\bf Abstract\vspace{-.5em}\vspace{\z@}}\end{center}%
   \quotation 
   \fi
  }%
   \def\registered{\relax\ifmmode{}\r@gistered
                    \else$\m@th\r@gistered$\fi}%
 \def\r@gistered{^{\ooalign
  {\hfil\raise.07ex\hbox{$\scriptstyle\rm\text{R}$}\hfil\crcr
  \mathhexbox20D}}}}{}%
\newdimen\theight
\def\Column{%
 \vadjust{\setbox\z@=\hbox{\scriptsize\quad\quad tcol}%
  \theight=\ht\z@\advance\theight by \dp\z@\advance\theight by \lineskip
  \kern -\theight \vbox to \theight{%
   \rightline{\rlap{\box\z@}}%
   \vss
   }%
  }%
 }%
\def\qed{%
 \ifhmode\unskip\nobreak\fi\ifmmode\ifinner\else\hskip5\p@\fi\fi
 \hbox{\hskip5\p@\vrule width4\p@ height6\p@ depth1.5\p@\hskip\p@}%
 }%
\def\miss{\hbox{\vrule height2\p@ width 2\p@ depth\z@}}%
\def\tcol#1{{\baselineskip=6\p@ \vcenter{#1}} \Column}  %
\def\newfmtname{LaTeX2e}
\def\chkcompat{%
   \if@compatibility
   \else
     \usepackage{latexsym}
   \fi
}
  \DeclareOldFontCommand{\rm}{\normalfont\rmfamily}{\mathrm}
  \DeclareOldFontCommand{\sf}{\normalfont\sffamily}{\mathsf}
  \DeclareOldFontCommand{\tt}{\normalfont\ttfamily}{\mathtt}
  \DeclareOldFontCommand{\bf}{\normalfont\bfseries}{\mathbf}
  \DeclareOldFontCommand{\it}{\normalfont\itshape}{\mathit}
  \DeclareOldFontCommand{\sl}{\normalfont\slshape}{\@nomath\sl}
  \DeclareOldFontCommand{\sc}{\normalfont\scshape}{\@nomath\sc}
\def\alpha{{\Greekmath 010B}}%
\def\beta{{\Greekmath 010C}}%
\def\gamma{{\Greekmath 010D}}%
\def\delta{{\Greekmath 010E}}%
\def\epsilon{{\Greekmath 010F}}%
\def\zeta{{\Greekmath 0110}}%
\def\eta{{\Greekmath 0111}}%
\def\theta{{\Greekmath 0112}}%
\def\iota{{\Greekmath 0113}}%
\def\kappa{{\Greekmath 0114}}%
\def\lambda{{\Greekmath 0115}}%
\def\mu{{\Greekmath 0116}}%
\def\nu{{\Greekmath 0117}}%
\def\xi{{\Greekmath 0118}}%
\def\pi{{\Greekmath 0119}}%
\def\rho{{\Greekmath 011A}}%
\def\sigma{{\Greekmath 011B}}%
\def\tau{{\Greekmath 011C}}%
\def\upsilon{{\Greekmath 011D}}%
\def\phi{{\Greekmath 011E}}%
\def\chi{{\Greekmath 011F}}%
\def\psi{{\Greekmath 0120}}%
\def\omega{{\Greekmath 0121}}%
\def\varepsilon{{\Greekmath 0122}}%
\def\vartheta{{\Greekmath 0123}}%
\def\varpi{{\Greekmath 0124}}%
\def\varrho{{\Greekmath 0125}}%
\def\varsigma{{\Greekmath 0126}}%
\def\varphi{{\Greekmath 0127}}%
\def\nabla{{\Greekmath 0272}}
\def\FindBoldGroup{%
   {\setbox0=\hbox{$\mathbf{x\global\edef\theboldgroup{\the\mathgroup}}$}}%
}
\def\Greekmath#1#2#3#4{%
    \if@compatibility
        \ifnum\mathgroup=\symbold
           \mathchoice{\mbox{\boldmath$\displaystyle\mathchar"#1#2#3#4$}}%
                      {\mbox{\boldmath$\textstyle\mathchar"#1#2#3#4$}}%
                      {\mbox{\boldmath$\scriptstyle\mathchar"#1#2#3#4$}}%
                      {\mbox{\boldmath$\scriptscriptstyle\mathchar"#1#2#3#4$}}%
        \else
           \mathchar"#1#2#3#4%
        \fi 
    \else 
        \FindBoldGroup
        \ifnum\mathgroup=\theboldgroup 
           \mathchoice{\mbox{\boldmath$\displaystyle\mathchar"#1#2#3#4$}}%
                      {\mbox{\boldmath$\textstyle\mathchar"#1#2#3#4$}}%
                      {\mbox{\boldmath$\scriptstyle\mathchar"#1#2#3#4$}}%
                      {\mbox{\boldmath$\scriptscriptstyle\mathchar"#1#2#3#4$}}%
        \else
           \mathchar"#1#2#3#4%
        \fi     	    
	  \fi}
\newif\ifGreekBold  \GreekBoldfalse
\let\SAVEPBF=\pbf
\def\pbf{\GreekBoldtrue\SAVEPBF}%
  \newcounter{equationnumber}  
  \def\mathletters{%
     \addtocounter{equation}{1}
     \edef\@currentlabel{\theequation}%
     \setcounter{equationnumber}{\c@equation}
     \setcounter{equation}{0}%
     \edef\theequation{\@currentlabel\noexpand\alph{equation}}%
  }
    \def\BibTeX{{\rm B\kern-.05em{\sc i\kern-.025em b}\kern-.08em
                 T\kern-.1667em\lower.7ex\hbox{E}\kern-.125emX}}}{}%
\def\AmS{{\protect\usefont{OMS}{cmsy}{m}{n}%
                A\kern-.1667em\lower.5ex\hbox{M}\kern-.125emS}}}{}%
\let\DOTSI\relax
\def\RIfM@{\relax\ifmmode}%
\def\FN@{\futurelet\next}%
\def\iint{\DOTSI\intno@\tw@\FN@\ints@}%
\def\iiint{\DOTSI\intno@\thr@@\FN@\ints@}%
\def\iiiint{\DOTSI\intno@4 \FN@\ints@}%
\def\idotsint{\DOTSI\intno@\z@\FN@\ints@}%
\def\ints@{\findlimits@\ints@@}%
\newif\iflimtoken@
\newif\iflimits@
\def\findlimits@{\limtoken@true\ifx\next\limits\limits@true
 \else\ifx\next\nolimits\limits@false\else
 \limtoken@false\ifx\ilimits@\nolimits\limits@false\else
 \ifinner\limits@false\else\limits@true\fi\fi\fi\fi}%
\def\multint@{\int\ifnum\intno@=\z@\intdots@                          
 \else\intkern@\fi                                                    
 \ifnum\intno@>\tw@\int\intkern@\fi                                   
 \ifnum\intno@>\thr@@\int\intkern@\fi                                 
 \int}
\def\multintlimits@{\intop\ifnum\intno@=\z@\intdots@\else\intkern@\fi
 \ifnum\intno@>\tw@\intop\intkern@\fi
 \ifnum\intno@>\thr@@\intop\intkern@\fi\intop}%
\def\intic@{%
    \mathchoice{\hskip.5em}{\hskip.4em}{\hskip.4em}{\hskip.4em}}%
\def\negintic@{\mathchoice
 {\hskip-.5em}{\hskip-.4em}{\hskip-.4em}{\hskip-.4em}}%
\def\ints@@{\iflimtoken@                                              
 \def\ints@@@{\iflimits@\negintic@
   \mathop{\intic@\multintlimits@}\limits                             
  \else\multint@\nolimits\fi                                          
  \eat@}
 \else                                                                
 \def\ints@@@{\iflimits@\negintic@
  \mathop{\intic@\multintlimits@}\limits\else
  \multint@\nolimits\fi}\fi\ints@@@}%
\def\intkern@{\mathchoice{\!\!\!}{\!\!}{\!\!}{\!\!}}%
\def\plaincdots@{\mathinner{\cdotp\cdotp\cdotp}}%
\def\intdots@{\mathchoice{\plaincdots@}%
 {{\cdotp}\mkern1.5mu{\cdotp}\mkern1.5mu{\cdotp}}%
 {{\cdotp}\mkern1mu{\cdotp}\mkern1mu{\cdotp}}%
 {{\cdotp}\mkern1mu{\cdotp}\mkern1mu{\cdotp}}}%
\def\RIfM@{\relax\protect\ifmmode}
\def\text{\RIfM@\expandafter\text@\else\expandafter\mbox\fi}
\let\nfss@text\text
\def\text@#1{\mathchoice
   {\textdef@\displaystyle\f@size{#1}}%
   {\textdef@\textstyle\tf@size{\firstchoice@false #1}}%
   {\textdef@\textstyle\sf@size{\firstchoice@false #1}}%
   {\textdef@\textstyle \ssf@size{\firstchoice@false #1}}%
   \glb@settings}
\def\textdef@#1#2#3{\hbox{{%
                    \everymath{#1}%
                    \let\f@size#2\selectfont
                    #3}}}
\newif\iffirstchoice@
\def\Let@{\relax\iffalse{\fi\let\\=\cr\iffalse}\fi}%
\def\vspace@{\def\vspace##1{\crcr\noalign{\vskip##1\relax}}}%
\def\multilimits@{\bgroup\vspace@\Let@
 \baselineskip\fontdimen10 \scriptfont\tw@
 \advance\baselineskip\fontdimen12 \scriptfont\tw@
 \lineskip\thr@@\fontdimen8 \scriptfont\thr@@
 \lineskiplimit\lineskip
 \vbox\bgroup\ialign\bgroup\hfil$\m@th\scriptstyle{##}$\hfil\crcr}%
\def\Sb{_\multilimits@}%
\def\endSb{\crcr\egroup\egroup\egroup}%
\def\Sp{^\multilimits@}%
\newdimen\ex@
\def\rightarrowfill@#1{$#1\m@th\mathord-\mkern-6mu\cleaders
 \hbox{$#1\mkern-2mu\mathord-\mkern-2mu$}\hfill
 \mkern-6mu\mathord\rightarrow$}%
\def\leftarrowfill@#1{$#1\m@th\mathord\leftarrow\mkern-6mu\cleaders
 \hbox{$#1\mkern-2mu\mathord-\mkern-2mu$}\hfill\mkern-6mu\mathord-$}%
\def\leftrightarrowfill@#1{$#1\m@th\mathord\leftarrow
\mkern-6mu\cleaders
 \hbox{$#1\mkern-2mu\mathord-\mkern-2mu$}\hfill
 \mkern-6mu\mathord\rightarrow$}%
\def\overrightarrow{\mathpalette\overrightarrow@}%
\def\overrightarrow@#1#2{\vbox{\ialign{##\crcr\rightarrowfill@#1\crcr
 \noalign{\kern-\ex@\nointerlineskip}$\m@th\hfil#1#2\hfil$\crcr}}}%
\def\overleftarrow{\mathpalette\overleftarrow@}%
\def\overleftarrow@#1#2{\vbox{\ialign{##\crcr\leftarrowfill@#1\crcr
 \noalign{\kern-\ex@\nointerlineskip}$\m@th\hfil#1#2\hfil$\crcr}}}%
\def\overleftrightarrow{\mathpalette\overleftrightarrow@}%
\def\overleftrightarrow@#1#2{\vbox{\ialign{##\crcr
   \leftrightarrowfill@#1\crcr
 \noalign{\kern-\ex@\nointerlineskip}$\m@th\hfil#1#2\hfil$\crcr}}}%
\def\underrightarrow{\mathpalette\underrightarrow@}%
\def\underrightarrow@#1#2{\vtop{\ialign{##\crcr$\m@th\hfil#1#2\hfil
  $\crcr\noalign{\nointerlineskip}\rightarrowfill@#1\crcr}}}%
\def\underleftarrow{\mathpalette\underleftarrow@}%
\def\underleftarrow@#1#2{\vtop{\ialign{##\crcr$\m@th\hfil#1#2\hfil
  $\crcr\noalign{\nointerlineskip}\leftarrowfill@#1\crcr}}}%
\def\underleftrightarrow{\mathpalette\underleftrightarrow@}%
\def\underleftrightarrow@#1#2{\vtop{\ialign{##\crcr$\m@th
  \hfil#1#2\hfil$\crcr
 \noalign{\nointerlineskip}\leftrightarrowfill@#1\crcr}}}%
\def\qopnamewl@#1{\mathop{\operator@font#1}\nlimits@}
\let\nlimits@\displaylimits
\def\setboxz@h{\setbox\z@\hbox}
\def\varlim@#1#2{\mathop{\vtop{\ialign{##\crcr
 \hfil$#1\m@th\operator@font lim$\hfil\crcr
 \noalign{\nointerlineskip}#2#1\crcr
 \noalign{\nointerlineskip\kern-\ex@}\crcr}}}}
 \def\rightarrowfill@#1{\m@th\setboxz@h{$#1-$}\ht\z@\z@
  $#1\copy\z@\mkern-6mu\cleaders
  \hbox{$#1\mkern-2mu\box\z@\mkern-2mu$}\hfill
  \mkern-6mu\mathord\rightarrow$}
\def\leftarrowfill@#1{\m@th\setboxz@h{$#1-$}\ht\z@\z@
  $#1\mathord\leftarrow\mkern-6mu\cleaders
  \hbox{$#1\mkern-2mu\copy\z@\mkern-2mu$}\hfill
  \mkern-6mu\box\z@$}
\def\projlim{\qopnamewl@{proj\,lim}}
\def\injlim{\qopnamewl@{inj\,lim}}
\def\varinjlim{\mathpalette\varlim@\rightarrowfill@}
\def\varprojlim{\mathpalette\varlim@\leftarrowfill@}
\def\varliminf{\mathpalette\varliminf@{}}
\def\varliminf@#1{\mathop{\underline{\vrule\@depth.2\ex@\@width\z@
   \hbox{$#1\m@th\operator@font lim$}}}}
\def\varlimsup{\mathpalette\varlimsup@{}}
\def\varlimsup@#1{\mathop{\overline
  {\hbox{$#1\m@th\operator@font lim$}}}}
\def\align{\@verbatim \frenchspacing\@vobeyspaces \@alignverbatim
You are using the "align" environment in a style in which it is not defined.}
\let\csname endalign*\endcsname =\endtrivlist
\def\alignat{\@verbatim \frenchspacing\@vobeyspaces \@alignatverbatim
You are using the "alignat" environment in a style in which it is not defined.}
\let\csname endalignat*\endcsname =\endtrivlist
\def\xalignat{\@verbatim \frenchspacing\@vobeyspaces \@xalignatverbatim
You are using the "xalignat" environment in a style in which it is not defined.}
\let\csname endxalignat*\endcsname =\endtrivlist
\def\gather{\@verbatim \frenchspacing\@vobeyspaces \@gatherverbatim
You are using the "gather" environment in a style in which it is not defined.}
\let\csname endgather*\endcsname =\endtrivlist
\def\multiline{\@verbatim \frenchspacing\@vobeyspaces \@multilineverbatim
You are using the "multiline" environment in a style in which it is not defined.}
\let\csname endmultiline*\endcsname =\endtrivlist
\def\arrax{\@verbatim \frenchspacing\@vobeyspaces \@arraxverbatim
You are using a type of "array" construct that is only allowed in AmS-LaTeX.}
\def\tabulax{\@verbatim \frenchspacing\@vobeyspaces \@tabulaxverbatim
You are using a type of "tabular" construct that is only allowed in AmS-LaTeX.}
\let\csname endarrax*\endcsname =\endtrivlist
\let\csname endtabulax*\endcsname =\endtrivlist
\def\@@eqncr{\let\@tempa\relax
    \ifcase\@eqcnt \def\@tempa{& & &}\or \def\@tempa{& &}%
      \else \def\@tempa{&}\fi
     \@tempa
     \if@eqnsw
        \iftag@
           \@taggnum
        \else
           \@eqnnum\stepcounter{equation}%
        \fi
     \fi
     \global\tag@false
     \global\@eqnswtrue
     \global\@eqcnt\z@\cr}
 \def\endequation{%
     \ifmmode\ifinner 
      \iftag@
        \addtocounter{equation}{-1} 
        $\hfil
           \displaywidth\linewidth\@taggnum\egroup \endtrivlist
        \global\tag@false
        \global\@ignoretrue   
      \else
        $\hfil
           \displaywidth\linewidth\@eqnnum\egroup \endtrivlist
        \global\tag@false
        \global\@ignoretrue 
      \fi
     \else   
      \iftag@
        \addtocounter{equation}{-1} 
        \eqno \hbox{\@taggnum}
        \global\tag@false%
        $$\global\@ignoretrue
      \else
        \eqno \hbox{\@eqnnum}
        $$\global\@ignoretrue
      \fi
     \fi\fi
 } 
 \newif\iftag@ \tag@false
 \def\tag{\@ifnextchar*{\@tagstar}{\@tag}}
 \def\@tag#1{%
     \global\tag@true
     \global\def\@taggnum{(#1)}}
 \def\@tagstar*#1{%
     \global\tag@true
     \global\def\@taggnum{#1}%
}
\begin{document}

\title{Decay of Complex-time Determinantal and Pfaffian\ Correlation
Functionals in Lattices}
\author{N. J. B. Aza \and J.-B. Bru \and W. de Siqueira Pedra}
\maketitle

\begin{abstract}
We supplement the determinantal and Pfaffian bounds of \cite{Sims-Warzel}
for many-body localization of quasi-free fermions, by considering the high
dimensional case and complex-time correlations. Our proof uses the
analyticity of correlation functions via the Hadamard three-line theorem. We
show that the dynamical localization for the one-particle system yields the
dynamical localization for the many-point fermionic correlation functions,
with respect to the Hausdorff distance in the determinantal case. In \cite%
{Sims-Warzel}, a stronger notion of decay for many-particle configurations
was used but only at dimension one and for real times. Considering
determinantal and pfaffian correlation functionals for complex times is
important in the study of weakly interacting fermions.\bigskip

\noindent \textbf{Keywords:} disordered fermion system, many-body
localization, determinant bound\bigskip

\noindent \textbf{AMS Subject Classification:} 82B44, 47B80
\end{abstract}

\section{Introduction}

Since a few years, the problem of (Anderson) localization in many-body
systems is garnering attention. The mathematical understanding of this
phenomenon for \emph{interacting }quantum particles, as adressed in 2006 by
\cite{interaction0} for weakly interacting fermions at small densities, is a
long-term goal. In 2009, \cite{interaction2,Aiz-Warzel} contributed first
rigorous results. In 2016, an exponential decay of many-particle
correlations was proven for quasi-free fermions in \emph{one-dimensional}
lattices with disorder \cite{Sims-Warzel}.\ Via the Jordan-Wigner
transformation, this includes the disordered $XY$ spin chains. This last
paper has attracted much attention and it has already been cited many times
in one and a half year. See, e.g., \cite%
{cite1,cite2,cite3,cite4,cite5,cite6,cite7,cite8,cite9}.

As pointed out\ in \cite{Sims-Warzel}, it is an interesting open question
(a) whether \cite[Theorems 1.1 and 1.2]{Sims-Warzel} can be generalized to
higher dimensions. Another open question (b) is their generalization for
\emph{complex}-time correlation functions. This\ last point is relevant
because such correlation functions (of quasi-free fermions) can be useful to
study localization of \emph{weakly interacting} fermion systems on lattices.
In fact, (quasi-free) complex-time correlation functions appear in the
perturbative expansion of (full) correlations for weakly interacting
systems. See, for instance, \cite[Section 5.4.1]{BratteliRobinson}.

By considering the many-body localization in the sense of the Hausdorff
distance, like in \cite{Aiz-Warzel}, we propose an answer to both questions
(a) and (b), using\ the Hadamard three-line theorem (Section \ref{Section
Hadamard}). See Corollary \ref{conditon local copy(2)}, which, together with
Theorem \ref{conditon local copy(1)}, is our\ main result on determinantal
correlation functionals. In a similar way, we also prove the decay of
complex-time Pfaffian\ correlation functionals with respect to a splitting
width (like \cite[Equation (5.9)]{Aiz-Warzel}) of particle configurations.
This is a version of \cite[Theorem 1.4]{Sims-Warzel} which holds true at any
dimension $d\in \mathbb{N}$. See Theorem \ref{thm majorama}.

\section{Decay of Determinantal Correlation Functionals}

\subsection{Setup of the Problem and Main Results\label{Section main}}

\noindent \underline{(i):} Let $d\in \mathbb{N}$. For a fixed parameter $%
\epsilon \in (0,1]$, we define%
\begin{equation}
\mathfrak{d}_{\epsilon }(\mathcal{X}_{1},\mathcal{X}_{2})\doteq \max \left\{
\max_{x_{1}\in \mathcal{X}_{1}}\min_{x_{2}\in \mathcal{X}_{2}}\left\vert
x_{1}-x_{2}\right\vert ^{\epsilon },\max_{x_{2}\in \mathcal{X}%
_{2}}\min_{x_{1}\in \mathcal{X}_{1}}\left\vert x_{1}-x_{2}\right\vert
^{\epsilon }\right\} ,\qquad \mathcal{X}_{1},\mathcal{X}_{2}\subset \mathbb{Z%
}^{d},  \label{Hausdorf}
\end{equation}%
which is the well-known Hausdorff distance between the two sets, associated
with the metric $(x_{1},x_{2})\mapsto |x_{1}-x_{2}|^{\epsilon }$ on $\mathbb{%
Z}^{d}$.\medskip

\noindent \underline{(ii):} We consider (non-relativistic) fermions in the
lattice $\mathbb{Z}^{d}$ with arbitrary finite spin set $\mathrm{S}$. Thus,
we define the one-particle Hilbert space to be $\mathfrak{h}\doteq \ell
^{2}\left( \mathbb{Z}^{d};\mathbb{C}^{\mathrm{S}}\right) $, the canonical
orthonormal basis $\left\{ \mathfrak{e}_{x,\sigma }\right\} _{(x,\sigma )\in
\mathbb{Z}^{d}\times \mathrm{S}}$ of which is%
\begin{equation}
\mathfrak{e}_{x_{0},\sigma _{0}}(x,\sigma )\doteq \delta _{x,x_{0}}\delta
_{\sigma ,\sigma _{0}}\ ,\qquad x,x_{0}\in \mathbb{Z}^{d},\ \sigma ,\sigma
_{0}\in \mathrm{S}.  \label{canonical onb1}
\end{equation}%
\mbox{ }\medskip

\noindent \underline{(iii):} Let $(\Omega ,\mathfrak{F},\mathfrak{a})$ be a
standard\footnote{%
I.e., $\mathfrak{F}$ is the Borel $\sigma $-algebra of a Polish space $%
\Omega $.} probability space. As is usual, $\mathbb{E}[\ \cdot \ ]$ denotes
the expectation value associated with the probability measure $\mathfrak{a}$%
. We consider $\mathfrak{F}$-measurable\ families $\{H_{\omega }\}_{\omega
\in \Omega }\subset \mathcal{B}\left( \mathfrak{h}\right) $ of bounded
one-particle Hamiltonians satisfying the following (one-body localization)
assumption, at fixed $\beta \in \mathbb{R}^{+}$:

\begin{condition}
\label{conditon local}\mbox{ }\newline
There is a Borel set $I\subset \mathbb{R}$ as well as constants $\epsilon
\in (0,1]$, $D$ and $\mu \in \mathbb{R}^{+}$ such that, for all $x_{1}\in
\mathbb{Z}^{d}$ and $R>0$,
\begin{equation}
\sum\limits_{x_{2}\in \mathbb{Z}^{d}:\left\vert x_{1}-x_{2}\right\vert
^{\epsilon }\geq R}\mathbb{E}\left[ \sup_{z\in \mathbb{S}_{\beta
}}\max_{\sigma _{1},\sigma _{2}\in \mathrm{S}}\left\vert \left\langle
\mathfrak{e}_{x_{1},\sigma _{1}},\frac{\mathrm{e}^{izH_{\omega }}\chi
_{I}\left( H_{\omega }\right) }{1+\mathrm{e}^{\beta H_{\omega }}}\mathfrak{e}%
_{x_{2},\sigma _{2}}\right\rangle _{\mathfrak{h}}\right\vert \right] \leq D\,%
\mathrm{e}^{-\mu R},  \label{truc}
\end{equation}%
where
\begin{equation}
\mathbb{S}_{\beta }\doteq \mathbb{R}-i\left[ 0,\beta \right] ,\qquad \beta
\in \mathbb{R}^{+},  \label{strip}
\end{equation}%
$\chi _{I}$ is the characteristic function of the set $I$, and $\left\vert
x_{1}-x_{2}\right\vert $ the euclidean distance between the lattice points $%
x_{1},x_{2}\in \mathbb{Z}^{d}$.
\end{condition}

This assumption is similar to the so-called strong exponential dynamical
localization in $I$, see, e.g., \cite[Definition 7.1]{WAlivre}. Note that,
for $\epsilon \in (0,1]$, $(x_{1},x_{2})\mapsto \left\vert
x_{1}-x_{2}\right\vert ^{\epsilon }$ defines a translation invariant metric
on the lattice $\mathbb{Z}^{d}$. Observe also that, for all $\beta \in
\mathbb{R}^{+}$ and $z\in \mathbb{S}_{\beta }$, the function $\lambda
\mapsto |\mathrm{e}^{z\lambda }\left( 1+\mathrm{e}^{\beta \lambda }\right)
^{-1}|$ on $\mathbb{R}$ is bounded by $1$. In particular, the left-hand side
of (\ref{truc}) is bounded by the eigenfunction correlator \cite[Eq. (7.1)]%
{WAlivre}. Condition \ref{conditon local} replaces \cite[Eq. (1.19)]%
{Sims-Warzel}, noting that
\begin{equation}
\rho \left( s,t\right) =\frac{\mathrm{e}^{i\left( t-s\right) H_{\omega }}}{1+%
\mathrm{e}^{\beta H_{\omega }}},\qquad s,t\in \mathbb{R},  \label{rho}
\end{equation}%
is the main example they have in mind \cite[Eq. (2.37)]{Sims-Warzel}.
\medskip

\noindent \underline{(iv):} Let $\mathrm{CAR}(\mathfrak{h})$ be the CAR $%
C^{\ast }$--algebra generated by the unit $\mathbf{1}$ and $\{a(\varphi
)\}_{\varphi \in \mathfrak{h}}$. For any $A_{1},A_{2}\in \mathrm{CAR}(%
\mathfrak{h})$ and any $z_{1},z_{2}\in \mathbb{C}$, we define%
\begin{equation}
\mathbb{O}_{z_{1},z_{2}}\left( A_{1},A_{2}\right) \doteq \left\{
\begin{array}{ccc}
A_{1}A_{2} & \text{if} & \mathrm{Im}\left( z_{1}\right) \leq \mathrm{Im}%
\left( z_{2}\right) , \\
-A_{2}A_{1} & \text{if} & \mathrm{Im}\left( z_{1}\right) >\mathrm{Im}\left(
z_{2}\right) .%
\end{array}%
\right.  \label{O alacon}
\end{equation}%
\mbox{ }\medskip

\noindent \underline{(v):} For any $\beta \in \mathbb{R}^{+}$ and $\omega
\in \Omega $, we define the (gauge invariant) quasi-free state $\rho
_{\omega }\equiv \rho _{\beta ,\omega }$ by the condition%
\begin{equation}
\rho _{\omega }\left( a(\varphi _{1})^{\ast }a(\varphi _{2})\right)
=\left\langle \varphi _{2},\frac{1}{1+\mathrm{e}^{\beta H_{\omega }}}\varphi
_{1}\right\rangle _{\mathfrak{h}}\ ,\qquad \varphi _{1},\varphi _{2}\in
\mathfrak{h}.  \label{conditions}
\end{equation}%
This state is the unique KMS state at inverse temperature $\beta \in \mathbb{%
R}^{+}$ associated with the unique strongly continuous group $\{\tau
_{t}^{(\omega )}\}_{t\in {\mathbb{R}}}$ of (Bogoliubov) automorphisms of $%
\mathrm{CAR}(\mathfrak{h})$ satisfying%
\begin{equation}
\tau _{t}^{(\omega )}\left( a\left( \varphi \right) \right) =a(\mathrm{e}%
^{itH_{\omega }}\varphi )\ ,\qquad t\in {\mathbb{R}},\ \varphi \in \mathfrak{%
h}.  \label{bogo transformation}
\end{equation}%
Note that, for all $\varphi \in \mathfrak{h}$, the maps
\begin{equation*}
t\mapsto \tau _{t}^{(\omega )}\left( a\left( \varphi \right) \right) \quad
\text{and}\quad t\mapsto \tau _{t}^{(\omega )}\left( a\left( \varphi \right)
^{\ast }\right)
\end{equation*}%
on $\mathbb{R}$ uniquely extend to entire functions: For any $z\in \mathbb{C}
$ and $\varphi \in \mathfrak{h}$,
\begin{equation}
\tau _{z}^{(\omega )}\left( a\left( \varphi \right) ^{\ast }\right) \doteq a(%
\mathrm{e}^{izH_{\omega }}\varphi )^{\ast }\quad \text{and}\quad \tau
_{z}^{(\omega )}\left( a\left( \varphi \right) \right) \doteq a(\mathrm{e}^{i%
\overline{z}H_{\omega }}\varphi ).  \label{bogo transformation2}
\end{equation}%
Observe additionally that, for any $z_{1},z_{2}\in \mathbb{C}$ and $\varphi
_{1},\varphi _{2}\in \mathfrak{h}$,
\begin{eqnarray}
&&\rho _{\omega }\left( \mathbb{O}_{z_{1},z_{2}}\left( \tau
_{z_{1}}^{(\omega )}(a(\varphi _{1})^{\ast }),\tau _{z_{2}}^{(\omega
)}(a(\varphi _{2}))\right) \right)  \label{eq sup} \\
&=&\left\{
\begin{array}{ccc}
\left\langle \varphi _{2},\frac{\mathrm{e}^{i\left( z_{1}-z_{2}\right)
H_{\omega }}}{1+\mathrm{e}^{\beta H_{\omega }}}\varphi _{1}\right\rangle _{%
\mathfrak{h}} & \text{if} & \mathrm{Im}\left( z_{1}\right) \leq \mathrm{Im}%
\left( z_{2}\right) , \\
-\left\langle \varphi _{2},\frac{\mathrm{e}^{\left( \beta
+i(z_{1}-z_{2})\right) H_{\omega }}}{1+\mathrm{e}^{\beta H_{\omega }}}%
\varphi _{1}\right\rangle _{\mathfrak{h}} & \text{if} & \mathrm{Im}\left(
z_{1}\right) >\mathrm{Im}\left( z_{2}\right) .%
\end{array}%
\right.  \notag
\end{eqnarray}%
\mbox{ }\medskip

Below, we show that strong one-body localization, in the sense of Condition %
\ref{conditon local}, yields the corresponding many-body localization for
the quasi-free state $\rho _{\omega }$, in the sense of the Hausdorff
distance, as stated in Corollary \ref{conditon local copy(2)}. This is
achieved by estimating, in Theorem \ref{determinant bound copy(1)},
determinants of the form
\begin{equation}
\det \left[ G_{\omega }\left( \left( \varphi _{k},z_{k}\right) ,\left(
\varphi _{N+l},z_{N+l}\right) \right) \right] _{k,l=1}^{N}  \label{**}
\end{equation}%
in terms of the entries of one single row or column. In (\ref{**}), $\beta
\in \mathbb{R}^{+}$, $N\in \mathbb{N}$, $\varphi _{1},\ldots ,\varphi
_{2N}\in \mathfrak{h}$ are normalized vectors, $z_{1},\ldots ,z_{2N}\in
\mathbb{S}_{\beta }$ and%
\begin{equation*}
G_{\omega }\left( \left( \varphi _{k},z_{k}\right) ,\left( \varphi
_{N+l},z_{N+l}\right) \right) \doteq \rho _{\omega }\left( \mathbb{O}%
_{z_{k},z_{N+l}}\left( \tau _{z_{k}}^{(\omega )}(a(\varphi _{k})^{\ast
}),\tau _{z_{N+l}}^{(\omega )}(a(\varphi _{N+l}))\right) \right)
\end{equation*}%
is the two-point, complex-time-ordered correlation function associated with
the quasi-free state $\rho _{\omega }$.

\begin{theorem}
\label{conditon local copy(1)}\mbox{ }\newline
Let $\{H_{\omega }\}_{\omega \in \Omega }\subset \mathcal{B}\left( \mathfrak{%
h}\right) $ be a family of bounded Hamiltonians. For all $\omega \in \Omega $%
, $\beta \in \mathbb{R}^{+}$, $N\in \mathbb{N}$, norm-one vectors $\varphi
_{1},\ldots ,\varphi _{2N}\in \mathfrak{h}$, and $z_{1},\ldots ,z_{2N}\in
\mathbb{S}_{\beta }$ (see (\ref{strip}))%
\begin{eqnarray*}
&&\left\vert \det \left[ G_{\omega }\left( \left( \varphi _{k},z_{k}\right)
,\left( \varphi _{N+l},z_{N+l}\right) \right) \right] _{k,l=1}^{N}\right\vert
\\
&\leq &\min \left\{ \min_{k\in \left\{ 1,\ldots ,N\right\}
}\sum_{l=1}^{N}\left\vert G_{\omega }\left( \left( \varphi _{k},z_{k}\right)
,\left( \varphi _{N+l},z_{N+l}\right) \right) \right\vert ,\min_{l\in
\left\{ 1,\ldots ,N\right\} }\sum_{k=1}^{N}\left\vert G_{\omega }\left(
\left( \varphi _{k},z_{k}\right) ,\left( \varphi _{N+l},z_{N+l}\right)
\right) \right\vert \right\} .
\end{eqnarray*}
\end{theorem}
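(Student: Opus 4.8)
The plan is to separate a clean combinatorial reduction from the analytic heart of the statement. It suffices to prove the single bound $|\det[G_\omega((\varphi_k,z_k),(\varphi_{N+l},z_{N+l}))]_{k,l=1}^{N}|\le 1$ for every $N$ and all norm-one $\varphi_1,\dots,\varphi_{2N}$ and $z_1,\dots,z_{2N}\in\mathbb S_\beta$. Indeed, writing $M$ for this matrix and expanding along a fixed row $k_0$, $\det M=\sum_{l=1}^{N}(-1)^{k_0+l}M_{k_0,l}\det M^{(k_0,l)}$, each minor $M^{(k_0,l)}$ obtained by deleting row $k_0$ and column $l$ is again a correlation matrix of the same kind, now for the norm-one families $\{\varphi_k\}_{k\ne k_0}$ and $\{\varphi_{N+l'}\}_{l'\ne l}$ with times in $\mathbb S_\beta$; granting $|\det|\le 1$ for all such matrices, the triangle inequality gives $|\det M|\le\sum_{l=1}^N|M_{k_0,l}|$. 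Minimizing over $k_0$, and repeating verbatim with a column expansion, produces the two inner minima whose minimum is the asserted bound.

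It remains to prove $|\det M|\le 1$ for all sizes. I would pass to the GNS representation $(\mathcal H_\omega,\pi_\omega,\Omega_\omega)$ of the quasi-free state $\rho_\omega$. Because each $H_\omega$ is bounded, $e^{izH_\omega}$ is bounded for all $z$, so by (\ref{bogo transformation2}) the complex-time fields $\tau_{z_k}^{(\omega)}(a(\varphi_k)^*)$ and $\tau_{z_{N+l}}^{(\omega)}(a(\varphi_{N+l}))$ are genuine CAR elements. By the fermionic Wick property of $\rho_\omega$ only creation--annihilation contractions survive (gauge invariance kills the rest), so $\det M$ equals, up to a sign, the expectation of the product of these $2N$ fields ordered by imaginary part; splitting that ordered product at its midpoint writes $\det M=\pm\langle\Phi,\Psi\rangle$ with $\Phi,\Psi\in\mathcal H_\omega$ obtained by acting with the first, respectively last, $N$ ordered fields on $\Omega_\omega$. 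Thus $|\det M|\le\|\Phi\|\,\|\Psi\|$, and the task is to show $\|\Phi\|,\|\Psi\|\le 1$.

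This norm bound, where the strip $\mathbb S_\beta$ and the Hadamard three-line theorem enter, is the main obstacle. At the real-time face (all $\mathrm{Im}\,z_m=0$) the ordering is inactive and, using (\ref{conditions}) and the unitarity of $e^{itH_\omega}$, the matrix is a genuine Gram matrix $M_{kl}=\langle u_k,v_l\rangle$ with $u_k=\pi_\omega(a(e^{it_kH_\omega}\varphi_k))\Omega_\omega$ and the analogous $v_l$; since $0\le(1+e^{\beta H_\omega})^{-1}\le 1$ one gets $\|u_k\|,\|v_l\|\le 1$, and $|\det M|\le 1$ follows from the Cauchy--Binet/Hadamard inequality $|\det[\langle u_k,v_l\rangle]|^2\le\det[\langle u_k,u_{k'}\rangle]\det[\langle v_l,v_{l'}\rangle]\le\prod_k\|u_k\|^2\prod_l\|v_l\|^2$. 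For general $z_m\in\mathbb S_\beta$ I would regard $\langle\Phi,\Psi\rangle$ as analytic in the imaginary parts $s_m=-\mathrm{Im}\,z_m\in[0,\beta]$ and bounded by $1$ on the faces $s_m\in\{0,\beta\}$ --- the face $s_m=0$ by the computation just given and the face $s_m=\beta$ by the KMS boundary condition, which is exactly the second line of (\ref{eq sup}) --- and then apply the Hadamard three-line theorem successively in $s_1,\dots,s_{2N}$ to carry $|\det M|\le 1$ into the whole polystrip.

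The genuinely delicate points, which I expect to consume most of the work, all sit in this last step: the prescription $\mathbb O_{z_k,z_{N+l}}$ jumps as $\mathrm{Im}\,z_k$ crosses $\mathrm{Im}\,z_{N+l}$, so to keep the continued quantity analytic I would fix the chamber determined by the relative order of the $2N$ imaginary parts, interpolate inside it, and match boundary data across chamber walls through the KMS relation; and one must track the fermionic signs of the CAR relations through the midpoint splitting to be sure the two factors really are vectors of norm at most one on the bounding faces. Once $|\det M|\le 1$ is secured on $\mathbb S_\beta^{\,2N}$, the reduction of the first paragraph completes the proof.
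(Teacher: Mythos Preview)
Your outer reduction is exactly the paper's: Laplace-expand along a row or column and bound each $(N{-}1)\times(N{-}1)$ minor by $1$; the paper isolates that last bound as Lemma~\ref{determinant bound}. Your identification of $\det M$ with a single $\pi$-ordered $2N$-point correlation via the fermionic Wick rule is likewise the paper's Lemma~\ref{lemma exp copy(2)}. The Cauchy--Schwarz splitting $\det M=\pm\langle\Phi,\Psi\rangle$ is an unnecessary detour: on the real-time face the paper simply uses $|\rho(B_1\cdots B_{2N})|\le\prod_k\|B_k\|\le\prod_k\|\varphi_k\|$, and there is no need to control $\|\Phi\|$ and $\|\Psi\|$ separately.

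The genuine gap is in your Hadamard step. Applying the three-line theorem ``successively in $s_1,\dots,s_{2N}$'' on the polystrip $[0,\beta]^{2N}$ ultimately requires bounding the \emph{entire} extension $F(z)=\rho\big(\mathbb O_\pi(\tau_{z_1}(a(\varphi_1)^*),\dots)\big)$, with $\pi$ \emph{fixed} by the original chamber, at every vertex $(s_m)\in\{0,\beta\}^{2N}$. But a generic vertex lies outside that chamber, so the $-i\beta$-shifted fields are not contiguous in the $\pi$-ordering and KMS cannot cycle them out. Already for $N=1$, $\pi=\mathrm{id}$, at the vertex $(s_1,s_2)=(0,\beta)$ one has $F=\langle\varphi_2,\mathrm e^{-\beta H_\omega}(1+\mathrm e^{\beta H_\omega})^{-1}\varphi_1\rangle$, whose modulus can be of order $\mathrm e^{\beta\|H_\omega\|}\gg 1$. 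Your suggested cure, ``matching across chamber walls via KMS'', conflates two unrelated boundaries: chamber walls are loci where two imaginary parts coincide, whereas KMS relates times shifted by $i\beta$.

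The paper's device that closes this gap is a change to increment variables: writing the $\pi$-ordered times as partial sums $\xi_1+\cdots+\xi_{2N-j+1}$ (see (\ref{gamma})), the analytic domain becomes the convex tube $\mathfrak T_{2N}$ over the simplex $K_{2N}=\{(s_j):s_j\in[-\beta,0],\ \sum s_j\ge-\beta\}$. The extreme points of $K_{2N}$ are exactly the origin and the $2N$ points with a single coordinate equal to $-\beta$; at each of these, KMS cycles the product to a real-time correlation, giving the bound $\prod_k\|\varphi_k\|$ on $\partial\mathfrak T_{2N}$. The multivariable log-convexity of Corollary~\ref{determinant bound copy(2)} then propagates $|\det M|\le 1$ to all of $\mathfrak T_{2N}$, and your Laplace reduction finishes the proof.
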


\begin{proof}
Fix all parameters of the theorem. By expanding the determinant along a
fixed row or column, for any $m\in \left\{ 1,\ldots ,N\right\} $,
\begin{align}
& \det \left[ G_{\omega }\left( \left( \varphi _{k},z_{k}\right) ,\left(
\varphi _{N+l},z_{N+l}\right) \right) \right] _{k,l=1}^{N}  \label{laplace0}
\\
& =\sum_{n=1}^{N}\left( -1\right) ^{m+n}G_{\omega }\left( \left( \varphi
_{m},z_{m}\right) ,\left( \varphi _{N+n},z_{N+n}\right) \right)   \notag \\
& \times \det \left[ G_{\omega }\left( \left( \varphi _{k},z_{k}\right)
,\left( \varphi _{N+l},z_{N+l}\right) \right) \right] _{\substack{ k\in
\left\{ 1,\ldots ,N\right\} \backslash \left\{ m\right\}  \\ l\in \left\{
1,\ldots ,N\right\} \backslash \left\{ n\right\} }}  \notag \\
& =\sum_{n=1}^{N}\left( -1\right) ^{m+n}G_{\omega }\left( \left( \varphi
_{n},z_{n}\right) ,\left( \varphi _{N+m},z_{N+m}\right) \right)   \notag \\
& \times \det \left[ G_{\omega }\left( \left( \varphi _{k},z_{k}\right)
,\left( \varphi _{N+l},z_{N+l}\right) \right) \right] _{\substack{ k\in
\left\{ 1,\ldots ,N\right\} \backslash \left\{ n\right\}  \\ l\in \left\{
1,\ldots ,N\right\} \backslash \left\{ m\right\} }}.  \notag
\end{align}%
Then, the assertion directly follows from Lemma \ref{determinant bound}.
\end{proof}

\begin{corollary}
\label{conditon local copy(2)}\mbox{ }\newline
If Condition \ref{conditon local} holds true then, for all $\beta \in
\mathbb{R}^{+}$, $N\in \mathbb{N}$, $\mathcal{X}_{1}=\{x_{1},\ldots ,x_{N}\},%
\mathcal{X}_{2}=\{x_{N+1},\ldots ,x_{2N}\}\subset \mathbb{Z}^{d}$ such that $%
\left\vert \mathcal{X}_{1}\right\vert =\left\vert \mathcal{X}_{2}\right\vert
=N$, and $z_{1},\ldots ,z_{2N}\in \mathbb{S}_{\beta }$,
\begin{equation*}
\mathbb{E}\left[ \max_{\sigma _{1},\ldots ,\sigma _{2N}}\left\vert \mathrm{%
det}\left[ G_{\omega }\left( (\chi _{I}(H_{\omega })\mathfrak{e}%
_{x_{k},\sigma _{k}},z_{k}),(\chi _{I}(H_{\omega })\mathfrak{e}%
_{x_{N+l},\sigma _{N+l}},z_{N+l})\right) \right] _{k,l=1}^{N}\right\vert %
\right] \leq D\,\mathrm{e}^{-\mu \mathfrak{d}_{\epsilon }(\mathcal{X}_{1},%
\mathcal{X}_{2})},
\end{equation*}%
where $\mathfrak{d}_{\epsilon }(\mathcal{X}_{1},\mathcal{X}_{2})$ is the
Hausdorff distance (\ref{Hausdorf}) between the $N$-particle configurations $%
\mathcal{X}_{1}$ and $\mathcal{X}_{2}$. Recall that $\chi _{I}$ is the
characteristic function of the Borel set $I$ and note that the constants $%
\epsilon $, $D$ and $\mu $ are exactly the same as in Condition \ref%
{conditon local}.
\end{corollary}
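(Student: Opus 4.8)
The plan is to combine the purely algebraic determinant estimate of Theorem \ref{conditon local copy(1)} with a geometric choice of expansion line dictated by the Hausdorff distance (\ref{Hausdorf}), and then to feed the resulting single-row (or single-column) sum into the one-body hypothesis, Condition \ref{conditon local}. First I would fix $\beta$, $N$, the configurations $\mathcal{X}_{1},\mathcal{X}_{2}$, the times $z_{1},\ldots ,z_{2N}\in \mathbb{S}_{\beta }$ and the spins, and note that the matrix entries are $G_{\omega }$ evaluated at the \emph{sub-normalized} vectors $\chi _{I}(H_{\omega })\mathfrak{e}_{x,\sigma }$. Since $\chi _{I}(H_{\omega })$ is an orthogonal projection, $\Vert \chi _{I}(H_{\omega })\mathfrak{e}_{x,\sigma }\Vert \leq 1$; and because $G_{\omega }$ is linear in its first vector slot and conjugate-linear in its second, normalizing each argument rescales the determinant by a product of these norms, all lying in $[0,1]$. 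Hence the estimate of Theorem \ref{conditon local copy(1)} transfers verbatim to the sub-normalized entries, and in particular, for any fixed row index $k_{0}$ the modulus of the determinant is bounded by $\sum_{l=1}^{N}|G_{\omega }((\chi _{I}(H_{\omega })\mathfrak{e}_{x_{k_{0}},\sigma _{k_{0}}},z_{k_{0}}),(\chi _{I}(H_{\omega })\mathfrak{e}_{x_{N+l},\sigma _{N+l}},z_{N+l}))|$ (and symmetrically by any single column sum), since a minimum over all rows and columns is no larger than any one such sum.

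Next I would exploit that the Hausdorff distance (\ref{Hausdorf}) is a maximum of two terms. If the first term attains $\mathfrak{d}_{\epsilon }(\mathcal{X}_{1},\mathcal{X}_{2})$, there is a point $x_{k_{0}}\in \mathcal{X}_{1}$ with $\min_{x\in \mathcal{X}_{2}}|x_{k_{0}}-x|^{\epsilon }=\mathfrak{d}_{\epsilon }(\mathcal{X}_{1},\mathcal{X}_{2})$, i.e. $x_{k_{0}}$ is at distance at least $\mathfrak{d}_{\epsilon }$ from \emph{every} point of $\mathcal{X}_{2}$, and I expand along row $k_{0}$. If instead the second term attains the maximum, a symmetric argument yields a column index $m_{0}$ with $x_{N+m_{0}}\in \mathcal{X}_{2}$ far from all of $\mathcal{X}_{1}$, and I expand along column $m_{0}$. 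The essential feature of Theorem \ref{conditon local copy(1)} here is that it provides the \emph{minimum} over both rows and columns, giving exactly the freedom to pick the expansion line that is uniformly far from the complementary configuration. Note that $\mathcal{X}_{1},\mathcal{X}_{2}$ are deterministic, so $\mathfrak{d}_{\epsilon }$ and the index $k_{0}$ are too, which is what legitimizes interchanging the choice with the disorder average later.

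The main work is to recognize each entry of the chosen row as a quantity controlled by Condition \ref{conditon local}. Using the explicit formula (\ref{eq sup}) together with $\chi _{I}(H_{\omega })^{2}=\chi _{I}(H_{\omega })$ and the fact that $\chi _{I}(H_{\omega })$ commutes with every function of $H_{\omega }$, each entry equals, up to a sign and after passing to the adjoint, $\langle \mathfrak{e}_{x_{k_{0}},\sigma _{1}},\mathrm{e}^{izH_{\omega }}\chi _{I}(H_{\omega })(1+\mathrm{e}^{\beta H_{\omega }})^{-1}\mathfrak{e}_{x_{N+l},\sigma _{2}}\rangle _{\mathfrak{h}}$ for a suitable $z$, which is precisely the form appearing in (\ref{truc}). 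The two cases of (\ref{eq sup}) must be handled separately: for $\mathrm{Im}(z_{k_{0}})\leq \mathrm{Im}(z_{N+l})$ one takes $z=z_{k_{0}}-z_{N+l}$, whereas for $\mathrm{Im}(z_{k_{0}})>\mathrm{Im}(z_{N+l})$ the extra factor $\mathrm{e}^{\beta H_{\omega }}$ corresponds to $z=z_{k_{0}}-z_{N+l}-i\beta $; in both cases a short check shows $z\in \mathbb{S}_{\beta }$, and taking the adjoint replaces $z$ by $-\bar{z}\in \mathbb{S}_{\beta }$, which is what moves the fixed far point $x_{k_{0}}$ into the left slot of (\ref{truc}). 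I expect this matching step — absorbing the projection, disentangling the two time-orderings, and verifying that the relevant complex time remains inside the strip $\mathbb{S}_{\beta }$ — to be the main obstacle; the combinatorial-geometric part is conceptually light.

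Finally I would assemble the estimate. Bounding the maximum over spins of the row sum by the sum of the spinwise maxima, taking the expectation, and interchanging the finite sum with $\mathbb{E}$, each summand is dominated by $\mathbb{E}[\sup_{z\in \mathbb{S}_{\beta }}\max_{\sigma _{1},\sigma _{2}}|\langle \mathfrak{e}_{x_{k_{0}},\sigma _{1}},\mathrm{e}^{izH_{\omega }}\chi _{I}(H_{\omega })(1+\mathrm{e}^{\beta H_{\omega }})^{-1}\mathfrak{e}_{x_{N+l},\sigma _{2}}\rangle _{\mathfrak{h}}|]$. Since every $x_{N+l}$ satisfies $|x_{k_{0}}-x_{N+l}|^{\epsilon }\geq \mathfrak{d}_{\epsilon }(\mathcal{X}_{1},\mathcal{X}_{2})$ and the $x_{N+l}$ are distinct, the sum over $l$ is a partial sum of the full series in (\ref{truc}) with $x_{1}=x_{k_{0}}$ and $R=\mathfrak{d}_{\epsilon }(\mathcal{X}_{1},\mathcal{X}_{2})$; Condition \ref{conditon local} therefore bounds it by $D\,\mathrm{e}^{-\mu \mathfrak{d}_{\epsilon }(\mathcal{X}_{1},\mathcal{X}_{2})}$, which is exactly the claimed inequality. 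The column case is identical after exchanging the roles of $\mathcal{X}_{1}$ and $\mathcal{X}_{2}$, and since the constants $\epsilon ,D,\mu $ are never altered, they coincide with those of Condition \ref{conditon local}.
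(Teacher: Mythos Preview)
Your proposal is correct and follows exactly the route the paper indicates: the paper's own proof is the single sentence ``Combine Condition \ref{conditon local} and Theorem \ref{conditon local copy(1)} with Equations (\ref{bogo transformation2}) and (\ref{eq sup}),'' and you have simply unpacked that combination in full detail, including the normalization issue for $\chi_I(H_\omega)\mathfrak{e}_{x,\sigma}$, the choice of row/column dictated by which half of the Hausdorff max is attained, and the verification via (\ref{eq sup}) that the resulting complex time stays in $\mathbb{S}_\beta$.
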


\begin{proof}
Combine Condition \ref{conditon local} and Theorem \ref{conditon local
copy(1)} with Equations (\ref{bogo transformation2}) and (\ref{eq sup}).
\end{proof}

The analogue of \cite[Theorem 1.1]{Sims-Warzel}, i.e., an estimate like
Corollary \ref{conditon local copy(2)} for the many-point correlation
functions at fixed $\omega \in \Omega $, instead of an estimate for their
expectation values, easily follows by replacing Condition \ref{conditon
local} with a similar bound for a fixed $\omega \in \Omega $. We omit the
details.

The bound of Corollary \ref{conditon local copy(2)}\ is a version of \cite[%
Theorem 1.2]{Sims-Warzel} which holds at any dimension $d\in \mathbb{N}$ and
for any complex times within the strip $\mathbb{S}_{\beta }$. However, two
observations in relation with \cite{Sims-Warzel} are important to mention:

\begin{itemize}
\item Since, for any $\mathcal{X}_{1},\mathcal{X}_{2},\mathcal{Y}_{1},%
\mathcal{Y}_{2}\subset \mathbb{Z}^{d}$,
\begin{equation*}
\mathfrak{d}_{\epsilon }(\mathcal{X}_{1}\cup \mathcal{X}_{2},\mathcal{Y}%
_{1}\cup \mathcal{Y}_{2})\leq \max \left\{ \mathfrak{d}_{\epsilon }(\mathcal{%
X}_{1},\mathcal{Y}_{1}),\mathfrak{d}_{\epsilon }(\mathcal{X}_{2},\mathcal{Y}%
_{2})\right\} ,
\end{equation*}%
we have%
\begin{equation}
\mathfrak{d}_{\epsilon }(\mathcal{X},\mathcal{Y})\leq \mathfrak{d}_{\epsilon
}^{(S)}(\mathcal{X},\mathcal{Y})\doteq \min_{\pi \in \mathcal{S}%
_{N}}\max_{j\in \{1,\ldots ,N\}}\left\vert x_{j}-y_{\pi \left( j\right)
}\right\vert ^{\epsilon }  \label{hadamabis}
\end{equation}%
for any set $\mathcal{X}=\{x_{1},\ldots ,x_{N}\}\subset \mathbb{Z}^{d}$ and $%
\mathcal{Y}=\{y_{1},\ldots ,y_{N}\}\subset \mathbb{Z}^{d}$ of $N\in \mathbb{N%
}$ (different) lattice points. Here, $\mathcal{S}_{N}$ is the set of all
permutations $\pi $ of $N$ elements. The distance we use, i.e., the
Hausdorff distance (\ref{Hausdorf}), is therefore weaker than the
symmetrized configuration distance $\mathfrak{d}_{\epsilon }^{(S)}$ \cite[%
Equation (1.13) and remarks below]{Sims-Warzel}. Nevertheless, Corollary \ref%
{conditon local copy(2)} yields the main features of localization. Whether
Corollary \ref{conditon local copy(2)} holds true, at any dimension, when $%
\mathfrak{d}_{\epsilon }$ is replaced with $\mathfrak{d}_{\epsilon }^{(S)}$
is an open question. See also discussions of \cite[Section 1.3]{Aiz-Warzel}.

\item The proofs of \cite[Theorems 1.1 and 1.2]{Sims-Warzel} use that, for
all $N\in \mathbb{N}$, $x_{1},\ldots ,x_{2N}\in \mathbb{Z}^{d}$, $\sigma
_{1},\ldots ,\sigma _{2N}\in \mathrm{S}$, and $t_{1},\ldots ,t_{2N}\in
\mathbb{R}$, the $N\times N$ matrix
\begin{equation*}
\mathbf{M}\doteq \left[ \left\langle \mathfrak{e}_{x_{N+l},\sigma
_{N+l}},\rho \left( t_{N+l},t_{k}\right) \mathfrak{e}_{x_{k},\sigma
_{k}}\right\rangle _{\mathfrak{h}}\right] _{k,l=1}^{N}
\end{equation*}%
defines an operator on $\mathbb{C}^{N}$ of norm at most $1$. This is true
even for complex times, provided that
\begin{equation}
z_{1}=\cdots =z_{N}\in \mathbb{S}_{\beta },\quad z_{N+1}=\cdots =z_{2N}\in
\mathbb{S}_{\beta },\quad \mathrm{Im}\left( z_{N}\right) \leq \mathrm{Im}%
\left( z_{N+1}\right) .  \label{condition time}
\end{equation}%
(cf. \cite[Erratum]{Sims-Warzel}).\ However, this is generally not true when
$z_{1},\ldots ,z_{2N}\in \mathbb{S}_{\beta }$ are different from each other.
For this reason, instead of a bound on the norm of $\mathbf{M}$, our proof
uses (in an essential way) the analyticity of correlation functions with
respect to complex times.
\end{itemize}

The results of this section are also reminiscent of \cite[Theorem 1.1]%
{Aiz-Warzel} where a bound like Corollary \ref{conditon local copy(2)}, with
the Hausdorff distance but for complex times satisfying (\ref{condition time}%
), can be found for $n$-particle\ correlation functions. Note, additionally,
that in \cite{Aiz-Warzel} a particle interaction is included, but \emph{no}
particle statistics is taken into account: The $n$-particle Hilbert space is
the full space $\ell ^{2}\left( \mathbb{Z}^{nd}\right) $. By contrast, we
consider many-\emph{fermion} systems, which would correspond in \cite[%
Theorem 1.1]{Aiz-Warzel} to restrict $\ell ^{2}\left( \mathbb{Z}^{nd}\right)
$ to its subspace of antisymmetric functions. In this situation, the
one-particle localization theory cannot be directly used, even in the
quasi-free fermion case. Moreover, we do not fix the particle number, by
using the grand-canonical setting.

Finally, observe that \emph{quasi-free}, complex-time-ordered, many-point
correlations appear in the perturbative expansion of\ \emph{interacting}
correlation functions. See, e.g., \cite[Section 5.4.1]{BratteliRobinson}.
Therefore, as a first step towards the proof of localization in fully
interacting fermion systems, it is important to establish localization for
these correlations, as stated in Corollary \ref{conditon local copy(2)}. For
instance, by combining Corollary \ref{conditon local copy(2)} with \cite[%
Theorem 5.4.4]{BratteliRobinson}, one can show that a local, weak
interaction cannot destroy the (static) localization of the thermal,
many-point correlation functions of quasi-free fermions in lattices.

\subsection{Universal Bounds on Determinants from the Hadamard Three-line
Theorem}

For any permutation $\pi \in \mathcal{S}_{n}$ of $n\in {\mathbb{N}}$
elements with sign $(-1)^{\pi }$, we define the monomial $\mathbb{O}_{\pi
}(A_{1},\ldots ,A_{n})\in \mathrm{CAR}(\mathfrak{h})$ in $A_{1},\ldots
,A_{n}\in \mathrm{CAR}(\mathfrak{h})$ by the product%
\begin{equation}
\mathbb{O}_{\pi }\left( A_{1},\ldots ,A_{n}\right) \doteq \left( -1\right)
^{\pi }A_{\pi ^{-1}(1)}\cdots A_{\pi ^{-1}(n)}.  \label{Opi1}
\end{equation}%
In other words, $\mathbb{O}_{\pi }$ places the operator $A_{k}$ at the $\pi
(k)$th position in the monomial $(-1)^{\pi }A_{\pi ^{-1}(1)}\cdots A_{\pi
^{-1}(n)}$. Further, for all $k,l\in \{1,\ldots ,n\}$, $k\neq l$,%
\begin{equation}
\pi _{k,l}:\{1,2\}\rightarrow \{1,2\}  \label{Opi2}
\end{equation}%
is the identity function if $\pi (k)<\pi (l)$, otherwise $\pi _{k,l}$
interchanges $1$ and $2$. (Remark that $\mathbb{O}_{z_{k},z_{l}}$ (\ref{O
alacon}) is equal to $\mathbb{O}_{\pi _{k,l}}$ for a conveniently chosen
permutation $\pi $.) Then, the following identities holds true for
quasi--free states:

\begin{lemma}
\label{lemma exp copy(2)}\mbox{ }\newline
Let $\rho $ be a quasi--free state on $\mathrm{CAR}(\mathfrak{h})$. For any $%
N\in \mathbb{N}$, all permutations $\pi \in \mathcal{S}_{2N}$ and $\varphi
_{1},\ldots ,\varphi _{2N}\in \mathfrak{h}$,%
\begin{eqnarray}
&&\det \left[ \rho
\Big(%
\mathbb{O}_{\pi _{k,N+l}}\left( a(\varphi _{k})^{\ast },a(\varphi
_{N+l})\right)
\Big)%
\right] _{k,l=1}^{N}  \notag \\
&=&\rho
\Big(%
\mathbb{O}_{\pi }\left( a(\varphi _{1})^{\ast },\ldots ,a(\varphi
_{N})^{\ast },a(\varphi _{2N}),\ldots ,a(\varphi _{N+1})\right)
\Big)%
.  \label{ass det O}
\end{eqnarray}
\end{lemma}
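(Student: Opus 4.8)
The plan is to read both sides through Wick's theorem, the defining property of a quasi--free state. I would first fix the bookkeeping: abbreviate the $2N$ arguments of $\mathbb{O}_{\pi}$ on the right--hand side of \eqref{ass det O} as $A_1,\dots ,A_{2N}$, where $A_j=a(\varphi_j)^{\ast}$ is a creation operator for $j\le N$ while $A_{N+1},\dots ,A_{2N}$ are the annihilation operators $a(\varphi_{2N}),\dots ,a(\varphi_{N+1})$. By \eqref{Opi1} the right--hand side is then the single monomial $(-1)^{\pi}\rho(A_{\pi^{-1}(1)}\cdots A_{\pi^{-1}(2N)})$, in which each $A_j$ occupies the $\pi(j)$th position. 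The reversal of the annihilation arguments is deliberate: it is exactly what makes the $\pi=\mathrm{id}$ case collapse to the plain determinant below.

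Next I would apply Wick's theorem to this monomial, expanding $\rho(A_{\pi^{-1}(1)}\cdots A_{\pi^{-1}(2N)})$ as a signed sum of products of $N$ two--point functions, one product per pairing of the $2N$ factors. Here I use that $\rho$ is gauge invariant, equivalently $\rho(a(\varphi)a(\psi))=\rho(a(\varphi)^{\ast}a(\psi)^{\ast})=0$, as holds for the quasi--free states $\rho_\omega$ of \eqref{conditions}; this is precisely what reduces the general Pfaffian form of Wick's theorem to a determinant, since every pairing coupling two creation operators or two annihilation operators now drops out. The surviving pairings match each creation operator $a(\varphi_k)^{\ast}$ to a distinct annihilation operator, and are therefore in bijection with $\tau\in\mathcal S_N$, the pair attached to $k$ being $\big(a(\varphi_k)^{\ast},a(\varphi_{N+\tau(k)})\big)$. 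Each Wick factor is the two--point function of this pair with its operators in the order in which they occur inside the monomial; since that relative order is what $\mathbb{O}_{\pi_{k,N+\tau(k)}}$ encodes in \eqref{Opi2}, the factor coincides, up to its internal sign, with the matrix entry $\rho\big(\mathbb{O}_{\pi_{k,N+\tau(k)}}(a(\varphi_k)^{\ast},a(\varphi_{N+\tau(k)}))\big)$.

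The remaining, and I expect the main, obstacle is to check that the signs conspire. Writing each matrix entry as $\epsilon_{k,\tau(k)}$ times the positionally--ordered two--point function, where $\epsilon_{k,\tau(k)}\in\{\pm1\}$ is the sign produced by $\mathbb{O}_{\pi_{k,N+\tau(k)}}$, I must verify that for every $\tau$ the signature of the corresponding Wick pairing, multiplied by the global factor $(-1)^{\pi}$ and by $\prod_k\epsilon_{k,\tau(k)}$, equals the Leibniz sign $(-1)^{\tau}$. Granting this, the reduced Wick expansion becomes $\sum_{\tau\in\mathcal S_N}(-1)^{\tau}\prod_{k=1}^N\rho\big(\mathbb{O}_{\pi_{k,N+\tau(k)}}(a(\varphi_k)^{\ast},a(\varphi_{N+\tau(k)}))\big)$, which is exactly the determinant on the left--hand side of \eqref{ass det O}.

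To organise the sign check cleanly I would induct on $N$ and, for fixed $N$, propagate in $\pi$ along adjacent transpositions from the base point $\pi=\mathrm{id}$. At $\pi=\mathrm{id}$ every creation operator precedes every annihilation operator, each $\pi_{k,N+l}$ is trivial, and the identity is the standard determinant form of Wick's theorem for gauge--invariant quasi--free states, $\rho(a(\varphi_1)^{\ast}\cdots a(\varphi_N)^{\ast}a(\varphi_{2N})\cdots a(\varphi_{N+1}))=\det[\rho(a(\varphi_k)^{\ast}a(\varphi_{N+l}))]_{k,l=1}^N$ (the argument reversal being what cancels the spurious column permutation). Passing from $\pi$ to $s\pi$ for an adjacent transposition $s$ swaps two neighbouring factors of the monomial. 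If they are of the same type the anticommutator vanishes and, via the CAR together with the sign flip in $(-1)^{\pi}$, both sides stay invariant (no matrix entry changes its comparison, since adjacent equal--type operators straddle no operator of the other type). If they are a creation/annihilation pair, the CAR produces a scalar anticommutator times a $(2N-2)$--point correlation, while on the determinant side exactly one entry $\mathbb{O}_{\pi_{k,N+l}}$ flips and the determinant changes by that same anticommutator times a cofactor; the induction hypothesis for $N-1$ identifies the cofactor with the $(2N-2)$--point correlation, closing the argument, with the base case $N=1$ immediate from \eqref{Opi2}.
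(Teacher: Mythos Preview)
Your argument is sound, and in fact you give considerably more than the paper does: the paper's own ``proof'' of this lemma is a one--line citation to \cite[Lemma~3.1]{Bru-Pedra-deter-bound} together with a pointer to the Pfaffian analogue~\eqref{ass O0-00bis}. So there is no in--paper argument to compare to; your self--contained derivation via Wick's theorem plus a transposition walk in $\mathcal S_{2N}$ is a genuine proof where the paper defers to an external source.

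Two remarks on your write--up. First, you are right to flag gauge invariance: the lemma as stated says only ``quasi--free,'' but without $\rho(aa)=\rho(a^{\ast}a^{\ast})=0$ the Pfaffian Wick expansion does not collapse to a determinant and the identity is simply false. In the paper this is harmless because the only states in play are the $\rho_\omega$ of~\eqref{conditions}, which are gauge invariant by construction, and the cited lemma in \cite{Bru-Pedra-deter-bound} carries that hypothesis; but it is worth saying explicitly. Second, your transposition scheme is correct, though the mixed--type step deserves one more sentence of care: when an adjacent creation/annihilation pair is swapped, the single matrix entry that flips changes by exactly the scalar anticommutator (since $\rho(\mathbb O_{\mathrm{id}}(a^{\ast},a))+\rho(\mathbb O_{(12)}(a^{\ast},a))$ is the CAR scalar), so the determinant moves by that scalar times the $(k_0,l_0)$ cofactor; on the monomial side the same scalar multiplies a $(2N-2)$--point correlation which, after the obvious relabeling, is precisely the $\mathbb O_{\pi'}$--ordered product governed by the restriction of $\pi$ to the remaining $2N-2$ slots. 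Checking that the cofactor sign $(-1)^{k_0+l_0}$ matches the sign relating $(-1)^{\pi}$ to $(-1)^{\pi'}$ is the one place where the bookkeeping is genuinely delicate, but it goes through and closes the double induction.
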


\begin{proof}
See \cite[Lemma 3.1]{Bru-Pedra-deter-bound}. Compare with (\ref{ass O0-00bis}%
).
\end{proof}

Using Lemma \ref{lemma exp copy(2)} and the Hadamard three-line theorem (via
Corollary \ref{determinant bound copy(2)}), we obtain a universal bound on
determinants of the form (\ref{**}):

\begin{lemma}
\label{determinant bound}\mbox{ }\newline
Fix $H=H^{\ast }\in \mathcal{B}\left( \mathfrak{h}\right) $. Let the
quasi--free state $\rho $ on $\mathrm{CAR}(\mathfrak{h})$ be the unique KMS
state at inverse temperature $\beta \in \mathbb{R}^{+}$ associated with the
unique strongly continuous group $\{\tau _{t}\}_{t\in {\mathbb{R}}}$ of auto%
\-%
morphisms of $\mathrm{CAR}(\mathfrak{h})$ satisfying (\ref{bogo
transformation})-(\ref{bogo transformation2}) for $H_{\omega }=H$. Then, for
any $N\in \mathbb{N}$, $\varphi _{1},\ldots ,\varphi _{2N}\in \mathfrak{h}$
and $z_{1},\ldots ,z_{2N}\in \mathbb{S}_{\beta }$ (see (\ref{strip})),
\begin{equation*}
\left\vert \det \left[ \rho
\Big(%
\mathbb{O}_{z_{k},z_{N+l}}\left( \tau _{z_{k}}(a(\varphi _{k})^{\ast }),\tau
_{z_{N+l}}(a(\varphi _{N+l}))\right)
\Big)%
\right] _{k,l=1}^{N}\right\vert \leq \prod\limits_{k=1}^{2N}\left\Vert
\varphi _{k}\right\Vert _{\mathfrak{h}}.
\end{equation*}
\end{lemma}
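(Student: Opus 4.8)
The plan is to reduce the determinant to a single $2N$-point correlation function and then control that correlation by analytic interpolation. First I would invoke Lemma \ref{lemma exp copy(2)}: recalling from (\ref{O alacon}) and the definition of $\pi_{k,l}$ that $\mathbb{O}_{z_k,z_{N+l}}$ coincides with $\mathbb{O}_{\pi_{k,N+l}}$ once $\pi\in\mathcal{S}_{2N}$ is chosen to order the $2N$ complex times $z_1,\dots,z_{2N}$ by their imaginary parts, and absorbing the Bogoliubov shifts through $\tau_{z_k}(a(\varphi_k)^\ast)=a(\mathrm{e}^{iz_kH}\varphi_k)^\ast$ and $\tau_{z_{N+l}}(a(\varphi_{N+l}))=a(\mathrm{e}^{i\overline{z_{N+l}}H}\varphi_{N+l})$ from (\ref{bogo transformation2}), the left-hand determinant becomes a single time-ordered correlation $\rho(\mathbb{O}_\pi(\cdots))$ of the quasi-free KMS state. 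This is the step that turns a determinant of two-point functions into one scalar correlation to which analytic methods apply.

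Next I would treat this correlation as a function of the complex times on the strip $\mathbb{S}_\beta$ (see (\ref{strip})). Since $H$ is bounded, each $z\mapsto\tau_z(a(\varphi)^\ast)$ and $z\mapsto\tau_z(a(\varphi))$ is entire, and for a fixed operator order the quasi-free KMS correlation is analytic and bounded on the tube singled out by the KMS condition; the time-ordering built into $\mathbb{O}_\pi$ is precisely the order making the correlation analytic on the relevant chamber. With analyticity in hand I would feed the correlation into the Hadamard three-line theorem through Corollary \ref{determinant bound copy(2)}, varying one complex time at a time across $\mathbb{S}_\beta$ and bounding its modulus in the interior by the maximum over the two boundary lines $\{\mathrm{Im}=0\}$ and $\{\mathrm{Im}=-\beta\}$. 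On the lower line the KMS condition lets me transport the corresponding operator cyclically, which lowers the spread of the imaginary parts; iterating, every time is pushed to the real axis, i.e. to a configuration of the type (\ref{condition time}).

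At that endpoint the bound is elementary. Using (\ref{eq sup}) with all $\mathrm{Im}(z_k)=0$, the matrix entries factor as $\langle g_l,h_k\rangle_{\mathfrak h}$ with $h_k=(1+\mathrm{e}^{\beta H})^{-1/2}\mathrm{e}^{it_kH}\varphi_k$ and $g_l=(1+\mathrm{e}^{\beta H})^{-1/2}\mathrm{e}^{it_{N+l}H}\varphi_{N+l}$; since $\mathrm{e}^{itH}$ is unitary and $(1+\mathrm{e}^{\beta H})^{-1/2}$ is a self-adjoint contraction, one has $\|h_k\|_{\mathfrak h}\le\|\varphi_k\|_{\mathfrak h}$ and $\|g_l\|_{\mathfrak h}\le\|\varphi_{N+l}\|_{\mathfrak h}$. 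The Gram--Hadamard determinant inequality $|\det[\langle g_l,h_k\rangle]|\le\prod_k\|h_k\|_{\mathfrak h}\prod_l\|g_l\|_{\mathfrak h}$ then yields exactly $\prod_{k=1}^{2N}\|\varphi_k\|_{\mathfrak h}$.

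The delicate point is that the time-ordered correlation is only chamber-wise analytic: across a wall where two imaginary parts coincide it jumps by a CAR anticommutator, so one cannot simply declare the scalar function analytic on all of $\mathbb{S}_\beta$ and apply the three-line theorem globally. The real work is therefore to set up the three-line argument with a fixed operator order on each strip and to track carefully the signs and the KMS reorderings generated each time an operator is transported to the $\{\mathrm{Im}=-\beta\}$ boundary, so that the interpolation terminates at the real-time configuration without degrading the constant. This bookkeeping is what Corollary \ref{determinant bound copy(2)} is meant to package.
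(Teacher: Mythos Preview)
Your outline matches the paper's approach: reduce the determinant to a single time-ordered $2N$-point function via Lemma~\ref{lemma exp copy(2)}, exploit analyticity together with the KMS property and Corollary~\ref{determinant bound copy(2)} to push all times to the real axis, and bound there. Two points deserve sharpening.

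First, Corollary~\ref{determinant bound copy(2)} does \emph{not} package the chamber-wise analyticity or the KMS bookkeeping; it is only the multivariable three-line statement on the simplex tube $\mathfrak{T}_{2N}$. The missing step in your sketch is the reparametrization that the paper performs explicitly: after fixing the operator order $\pi$ so that $\mathrm{Im}(z_{\pi^{-1}(1)})\le\cdots\le\mathrm{Im}(z_{\pi^{-1}(2N)})$, one writes each time as a telescoping partial sum $\xi_1+\cdots+\xi_{2N-\pi(j)+1}$ and defines $\Upsilon(\xi_1,\ldots,\xi_{2N})$ with the order $\mathbb{O}_\pi$ \emph{frozen}. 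This makes $\Upsilon$ genuinely entire (no walls), and the original constraints $z_j\in\mathbb{S}_\beta$ together with the ordering of imaginary parts translate precisely into $(\xi_1,\ldots,\xi_{2N})\in\mathfrak{T}_{2N}$. The KMS identity then gives $\Upsilon(t_1,\ldots,t_{k-1},t_k-i\beta,t_{k+1},\ldots,t_{2N})=\Upsilon(t_{k+1},\ldots,t_{2N},t_1,\ldots,t_k)$, so every extreme point of $K_{2N}$ has the same supremum as the real axis. This is a single application of Corollary~\ref{determinant bound copy(2)}, not an iteration ``one complex time at a time''.

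Second, once all times are real the paper does not return to the determinant: since $\rho$ is a state and $\|\tau_t(a(\varphi)^{(\ast)})\|=\|\varphi\|_{\mathfrak h}$ for real $t$, one has directly $|\rho(A_{\pi^{-1}(1)}\cdots A_{\pi^{-1}(2N)})|\le\prod_{k=1}^{2N}\|\varphi_k\|_{\mathfrak h}$. Your Gram--Hadamard route is correct but takes a detour.
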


\begin{proof}
Fix all parameters of the lemma and choose any permutation $\pi \in \mathcal{%
S}_{2N}$ such that, for all $k,l\in \left\{ 1,\ldots ,N\right\} $,
\begin{equation}
\mathrm{Im}(z_{k})\leq \mathrm{Im}(z_{N+l})\Leftrightarrow \pi \left(
k\right) <\pi \left( N+l\right) \ .  \label{p1}
\end{equation}%
Then, by Lemma \ref{lemma exp copy(2)},%
\begin{eqnarray}
&&\det \left[ \rho
\Big(%
\mathbb{O}_{z_{k},z_{N+l}}\left( \tau _{z_{k}}(a(\varphi _{k})^{\ast }),\tau
_{z_{N+l}}(a(\varphi _{N+l}))\right)
\Big)%
\right] _{k,l=1}^{N}  \label{by toto} \\
&=&\rho
\Big(%
\mathbb{O}_{\pi }\left( \tau _{z_{1}}(a(\varphi _{1})^{\ast }),\ldots ,\tau
_{z_{N}}(a(\varphi _{N})^{\ast }),\tau _{z_{2N}}(a(\varphi _{2N})),\ldots
,\tau _{z_{N+1}}(a(\varphi _{N+1}))\right)
\Big)%
.  \notag
\end{eqnarray}%
Define the entire analytic map $\Upsilon $ from $\mathbb{C}^{2N}$ to $%
\mathbb{C}$ by%
\begin{align}
\Upsilon (\xi _{1},\ldots ,\xi _{2N})& \doteq \rho
\Big(%
\mathbb{O}_{\pi }%
\Big(%
\tau _{\xi _{1}+\cdots +\xi _{2N-\pi \left( 1\right) +1}}(a(\varphi
_{1})^{\ast }),\ldots ,\tau _{\xi _{1}+\cdots +\xi _{2N-\pi \left( N\right)
+1}}(a(\varphi _{N})^{\ast }),  \notag \\
& \tau _{\xi _{1}+\cdots +\xi _{2N-\pi \left( 2N\right) +1}}(a(\varphi
_{2N})),\ldots ,\tau _{\xi _{1}+\cdots +\xi _{2N-\pi \left( N+1\right)
+1}}(a(\varphi _{N+1}))%
\Big)%
\Big)%
.  \label{gamma}
\end{align}%
Now, impose additionally that the permutation $\pi $ of $2N$ elements used
in (\ref{by toto})-(\ref{gamma}) satisfies, for any $k,l\in \left\{ 1,\ldots
,N\right\} $, $k\neq l$, the conditions
\begin{equation*}
\mathrm{Im}(z_{k})<\mathrm{Im}(z_{l})\Leftrightarrow \pi \left( k\right)
<\pi \left( l\right) ;\ \mathrm{Im}(z_{2N-k})<\mathrm{Im}(z_{2N-l})%
\Leftrightarrow \pi \left( 2N-k\right) <\pi \left( 2N-l\right) .
\end{equation*}%
Ergo, by (\ref{p1}),%
\begin{equation}
\mathrm{Im}(z_{\pi ^{-1}(1)})\leq \cdots \leq \mathrm{Im}(z_{\pi
^{-1}(N)})\leq \mathrm{Im}(z_{\pi ^{-1}(2N)})\leq \cdots \leq \mathrm{Im}%
(z_{\pi ^{-1}(N+1)})  \label{permutation}
\end{equation}%
and, by (\ref{by toto})-(\ref{gamma}), the assertion follows if we can bound
the function $\Upsilon $ on the tube $\mathfrak{T}_{2N}$ defined below by (%
\ref{tubetueb}) for $n=2N$. Since $\Upsilon $ is uniformally bounded on $%
\mathfrak{T}_{2N}$, it suffices to bound the function $\Upsilon $ on the
boundary%
\begin{equation*}
\partial \mathfrak{T}_{2N}\doteq \left\{ \left( \xi _{1},\ldots ,\xi
_{2N}\right) \in \mathbb{C}^{2N}:\forall j\in \{1,\ldots ,2N\},\ \mathrm{Im}%
(\xi _{j})\in \left\{ -\beta ,0\right\} ,\text{ }\sum_{j=1}^{2N}\mathrm{Im}%
(\xi _{j})\in \left\{ -\beta ,0\right\} \right\} ,
\end{equation*}%
by Corollary \ref{determinant bound copy(2)}. By the KMS property \cite[%
Section 5.3.1]{BratteliRobinson}, note that, for all $t_{1},\ldots
,t_{2N}\in \mathbb{R}$ and $k\in \left\{ 1,\ldots ,2N\right\} $,
\begin{equation*}
\Upsilon (t_{1},\ldots ,t_{k-1},t_{k}-i\beta ,t_{k+1},\ldots
,t_{2N})=\Upsilon (t_{k+1},\ldots ,t_{2N},t_{1},\ldots ,t_{k})
\end{equation*}%
while
\begin{equation*}
\sup_{(\xi _{1},\ldots ,\xi _{2N})\in \mathbb{R}^{2N}}\left\vert \Upsilon
(\xi _{1},\ldots ,\xi _{2N})\right\vert \leq
\prod\limits_{k=1}^{2N}\left\Vert \varphi _{k}\right\Vert _{\mathfrak{h}}.
\end{equation*}%
As a consequence,
\begin{equation}
\sup_{(\xi _{1},\ldots ,\xi _{2N})\in \mathfrak{T}_{2N}}\left\vert \Upsilon
(\xi _{1},\ldots ,\xi _{2N})\right\vert =\sup_{(\xi _{1},\ldots ,\xi
_{2N})\in \partial \mathfrak{T}_{2N}}\left\vert \Upsilon (\xi _{1},\ldots
,\xi _{2N})\right\vert \leq \prod\limits_{k=1}^{2N}\left\Vert \varphi
_{k}\right\Vert _{\mathfrak{h}}  \label{Hoelder}
\end{equation}%
and the assertion follows from (\ref{by toto}), (\ref{gamma}) and (\ref%
{tubetueb}).
\end{proof}

Observe that estimates like (\ref{Hoelder}) are related to the
generalization of the H\"{o}lder inequality to non--commutative $L^{p}$%
--spaces. See, e.g., \cite{Araki-Masuda}.

\section{Decay of Pfaffian Correlation Functionals}

An estimate similar to Corollary \ref{conditon local copy(2)} can be
obtained for Pfaffians of the two-point correlation functions on the $d$%
-dimensional square lattice $\mathbb{Z}^{d}$, by the same methods, because
they\ also can be seen, like in the proof of Lemma \ref{determinant bound},
as many-point correlation functions of quasi-free fermions.\medskip

\noindent \underline{(i):} For a fixed parameter $\epsilon \in (0,1]$ and
any subset $\mathcal{X}\subset \mathbb{Z}^{d}$ we define the quantity%
\begin{equation}
\ell _{\epsilon }(\mathcal{X})\doteq \max_{x\in \mathcal{X}}\min_{y\in
\mathcal{X}\backslash \left\{ x\right\} }\left\vert x-y\right\vert
^{\epsilon }.  \label{distance concentration}
\end{equation}%
It is a kind of splitting width of the configuration $\mathcal{X}$ with
respect to the metric $(x,y)\mapsto |x-y|^{\epsilon }$: This quantity is
large whenever isolated points of $\mathcal{X}$ are spread in space, but it
stays small if the points are packed in clusters containing at least two
points. It is used here to quantify the localization of Pfaffian correlation
functionals. Observe that $\ell _{\epsilon }$ is similar to the splitting
width of a configuration defined by \cite[Equation (5.9)]{Aiz-Warzel}%
.\medskip

\noindent \underline{(ii):} For any $N\in \mathbb{N}$, the Pfaffian of a $%
2N\times 2N$ \emph{skew-symmetric} complex matrix $M$ is defined by%
\begin{equation}
\mathrm{Pf}\left[ M_{k,l}\right] _{k,l=1}^{2N}\doteq \frac{1}{2^{N}N!}%
\sum_{\pi \in \mathcal{S}_{2N}}\left( -1\right) ^{\pi
}\prod\limits_{j=1}^{N}M_{\pi \left( 2j-1\right) ,\pi \left( 2j\right) },
\label{pgsdgkljsdlkfj}
\end{equation}
where we recall that $\mathcal{S}_{2N}$ is the set of all permutations of $2N
$ elements.\medskip

\noindent \underline{(iii):} Let the field operators be defined by
\begin{equation*}
\mathrm{B}\left( \varphi \right) \doteq a\left( \varphi \right) ^{\ast
}+a\left( \varphi \right) ,\qquad \varphi \in \mathfrak{h}.
\end{equation*}%
For $(x,\sigma )\in \mathbb{Z}^{d}\times \mathrm{S}$ and $\varphi =\mathfrak{%
e}_{x,\sigma }$ or $\varphi =i\mathfrak{e}_{x,\sigma }$, we obtain the
on-site Majorana fermions of \cite[Equation (1.22)]{Sims-Warzel}. \mbox{ }%
\medskip

Below, we show that strong one-body localization, in the sense of Condition %
\ref{conditon local}, yields the localization of many-point correlations of
field operators with respect to the quantity (\ref{distance concentration}).
This is achieved by estimating, in Theorem \ref{thm majorama}, Pfaffians of
the form
\begin{equation}
\mathrm{Pf}\left[ \mathcal{G}_{\omega }\left( \left( \varphi
_{k},z_{k}\right) ,\left( \varphi _{l},z_{l}\right) \right) \right]
_{k,l=1}^{2N}  \label{***}
\end{equation}%
in terms of the entries of one single row. In (\ref{***}), $\beta \in
\mathbb{R}^{+}$, $N\in \mathbb{N}$, $\varphi _{1},\ldots ,\varphi _{2N}\in
\mathfrak{h}$ are normalized vectors, $z_{1},\ldots ,z_{2N}\in \mathbb{S}%
_{\beta }$ and%
\begin{equation*}
\mathcal{G}_{\omega }\left( \left( \varphi _{k},z_{k}\right) ,\left( \varphi
_{l},z_{l}\right) \right) \doteq \rho _{\omega }\left( \mathbb{O}%
_{z_{k},z_{l}}\left( \tau _{z_{k}}^{(\omega )}(\mathrm{B}(\varphi
_{k})),\tau _{z_{l}}^{(\omega )}(\mathrm{B}(\varphi _{l}))\right) \right)
\end{equation*}%
is the two-point, complex-time-ordered correlation function of field
operators associated with the quasi-free state $\rho _{\omega }$. See
Section \ref{Section main}. Observe that the matrix in the Pfaffian of (\ref%
{***}) is skew-symmetric, by construction.

\begin{theorem}
\label{thm majorama}\mbox{ }\newline
Let $\{H_{\omega }\}_{\omega \in \Omega }\subset \mathcal{B}\left( \mathfrak{%
h}\right) $ be a $\mathfrak{F}$-measurable\ family $\{H_{\omega }\}_{\omega
\in \Omega }\subset \mathcal{B}\left( \mathfrak{h}\right) $ of bounded
(one-particle) Hamiltonians satisfying Condition \ref{conditon local}. Then,
for all $\omega \in \Omega $, $\beta \in \mathbb{R}^{+}$, $N\in \mathbb{N}$,
$\mathcal{X}=\{x_{1},\ldots ,x_{2N}\}\subset \mathbb{Z}^{d}$ such that $%
\left\vert \mathcal{X}\right\vert =2N$, and $z_{1},\ldots ,z_{2N}\in \mathbb{%
S}_{\beta }$ (see (\ref{strip})),%
\begin{equation*}
\mathbb{E}\left[ \max_{\substack{ p_{1},\ldots ,p_{2N}\in \left\{
0,1\right\}  \\ \sigma _{1},\ldots ,\sigma _{2N}\in \mathrm{S}}}\mathrm{Pf}%
\left[ \mathcal{G}_{\omega }\left( \left( i^{p_{k}}\chi _{I}(H_{\omega })%
\mathfrak{e}_{x_{k},\sigma _{k}},z_{k}\right) ,\left( i^{p_{l}}\chi
_{I}(H_{\omega })\mathfrak{e}_{x_{l},\sigma _{l}},z_{l}\right) \right) %
\right] _{k,l=1}^{2N}\right] \leq 2D\,\mathrm{e}^{-\mu \ell _{\epsilon }(%
\mathcal{X})}.
\end{equation*}%
The constants $\epsilon $, $D$ and $\mu $ are exactly the same as in
Condition \ref{conditon local}.
\end{theorem}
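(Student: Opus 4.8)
The plan is to follow the two-step strategy of the determinantal case — the universal bound of Lemma~\ref{determinant bound} feeding into a Laplace-type expansion as in Theorem~\ref{conditon local copy(1)}, the result then being combined with Condition~\ref{conditon local} as in Corollary~\ref{conditon local copy(2)} — but with determinants replaced by Pfaffians and the creation/annihilation correlation $G_\omega$ replaced by the field-operator correlation $\mathcal{G}_\omega$. The crucial preliminary is the Pfaffian analogue of Lemma~\ref{determinant bound}: for the KMS state $\rho$, any $M\in\mathbb{N}$, vectors $\varphi_1,\ldots,\varphi_{2M}\in\mathfrak{h}$ and $z_1,\ldots,z_{2M}\in\mathbb{S}_\beta$,
\begin{equation*}
\left\vert \mathrm{Pf}\left[ \rho\left( \mathbb{O}_{z_k,z_l}\left( \tau_{z_k}(\mathrm{B}(\varphi_k)),\tau_{z_l}(\mathrm{B}(\varphi_l))\right)\right)\right]_{k,l=1}^{2M}\right\vert \leq \prod\limits_{k=1}^{2M}\left\Vert \varphi_k\right\Vert_{\mathfrak{h}}.
\end{equation*}
To obtain this I would first invoke the Pfaffian (Wick) identity for quasi--free states, the self-adjoint-field counterpart of Lemma~\ref{lemma exp copy(2)}, which, for a permutation $\pi\in\mathcal{S}_{2M}$ ordering the imaginary parts of the $z_k$, rewrites the Pfaffian as the single $2M$-point correlation $\rho(\mathbb{O}_\pi(\tau_{z_1}(\mathrm{B}(\varphi_1)),\ldots,\tau_{z_{2M}}(\mathrm{B}(\varphi_{2M}))))$. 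The tube $\mathfrak{T}_{2M}$, the entire function $\Upsilon$, and the Hadamard three-line / KMS-periodicity argument from the proof of Lemma~\ref{determinant bound} then apply essentially verbatim; the only new input is the real-time boundary bound, which is here even simpler, since $\mathrm{B}(\psi)^2=\|\psi\|_{\mathfrak{h}}^2\mathbf{1}$ forces $\|\mathrm{B}(\psi)\|=\|\psi\|_{\mathfrak{h}}$, so that $\sup_{(t_1,\ldots,t_{2M})\in\mathbb{R}^{2M}}|\Upsilon|\le\prod_k\|\varphi_k\|_{\mathfrak{h}}$.

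With this universal bound in hand I would establish the Pfaffian version of Theorem~\ref{conditon local copy(1)}. Expanding the $2N\times 2N$ Pfaffian along a fixed row $i$,
\begin{equation*}
\mathrm{Pf}[M]_{k,l=1}^{2N}=\sum_{j\neq i}(-1)^{i+j+1+\theta(i,j)}\,M_{i,j}\,\mathrm{Pf}[M_{\widehat{i},\widehat{j}}],
\end{equation*}
where $M_{\widehat{i},\widehat{j}}$ is the skew-symmetric submatrix with rows and columns $i,j$ deleted and $\theta(i,j)\in\{0,1\}$ records whether $j<i$. Each $M_{\widehat{i},\widehat{j}}$ is again a matrix of field-operator correlation functions for the $2N-2$ vectors $\{\varphi_k\}_{k\neq i,j}$, so the universal bound gives $|\mathrm{Pf}[M_{\widehat{i},\widehat{j}}]|\le\prod_{k\neq i,j}\|\varphi_k\|_{\mathfrak{h}}\le 1$ for the norm-$\le 1$ vectors $\varphi_k=i^{p_k}\chi_I(H_\omega)\mathfrak{e}_{x_k,\sigma_k}$. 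Hence, for every row $i$,
\begin{equation*}
\left\vert\mathrm{Pf}[\mathcal{G}_\omega(\ldots)]_{k,l=1}^{2N}\right\vert\le\sum_{j\neq i}\left\vert\mathcal{G}_\omega\big((\varphi_i,z_i),(\varphi_j,z_j)\big)\right\vert .
\end{equation*}
I would then choose $i=i_\ast$ to be the index realizing the maximum in $\ell_\epsilon(\mathcal{X})$ (see (\ref{distance concentration})), so that $|x_{i_\ast}-x_j|^\epsilon\ge\ell_\epsilon(\mathcal{X})$ for all $j\neq i_\ast$.

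Finally I would insert this into the expectation. Expanding $\mathrm{B}(\varphi)=a(\varphi)^\ast+a(\varphi)$ in each entry $\mathcal{G}_\omega((\varphi_{i_\ast},z_{i_\ast}),(\varphi_j,z_j))$ produces four terms, of which the two gauge-violating ones ($a^\ast a^\ast$ and $aa$) vanish because $\rho_\omega$ is gauge invariant; the two survivors are of the form (\ref{eq sup}), and for $z_{i_\ast},z_j\in\mathbb{S}_\beta$ their exponents ($z_{i_\ast}-z_j$ and $z_{i_\ast}-z_j-i\beta$) again lie in $\mathbb{S}_\beta$. After absorbing $\chi_I(H_\omega)$ and the unit-modulus phases $i^{p}$, each survivor is bounded in modulus by $\sup_{z\in\mathbb{S}_\beta}\max_\sigma|\langle\mathfrak{e}_{x_{i_\ast},\sigma},\mathrm{e}^{izH_\omega}\chi_I(H_\omega)(1+\mathrm{e}^{\beta H_\omega})^{-1}\mathfrak{e}_{x_j,\sigma'}\rangle_{\mathfrak{h}}|$, the quantity controlled by Condition~\ref{conditon local}; the reverse-ordered survivor is treated identically after taking adjoints, using that $-\overline{z}\in\mathbb{S}_\beta$ whenever $z\in\mathbb{S}_\beta$. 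Pulling the maxima over spins and over $p_1,\ldots,p_{2N}$ through the finite sum, taking $\mathbb{E}[\,\cdot\,]$, and summing over $x_j$ with $|x_{i_\ast}-x_j|^\epsilon\ge\ell_\epsilon(\mathcal{X})$ yields two copies of the left-hand side of (\ref{truc}) at $R=\ell_\epsilon(\mathcal{X})$, giving $2D\,\mathrm{e}^{-\mu\ell_\epsilon(\mathcal{X})}$. The main obstacle is the universal Pfaffian bound itself: one must verify that the Pfaffian-Wick identity with the correct ordering permutation $\pi$, fed into the Hadamard three-line / tube machinery of Lemma~\ref{determinant bound}, transfers to the self-adjoint field operators; once this is secured, the row expansion and the application of Condition~\ref{conditon local} are routine.
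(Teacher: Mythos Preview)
Your proposal is correct and follows essentially the same route as the paper's proof: the paper likewise establishes the universal Pfaffian bound via the Pfaffian--Wick identity for quasi--free states combined with the Hadamard/KMS tube argument and $\|\mathrm{B}(\varphi)\|=\|\varphi\|_{\mathfrak{h}}$, then performs the Laplace expansion of the Pfaffian along a row, and finally uses gauge invariance of $\rho_\omega$ to reduce $\mathcal{G}_\omega$ to two surviving terms controlled by Condition~\ref{conditon local}, picking up the factor $2$ in $2D\,\mathrm{e}^{-\mu\ell_\epsilon(\mathcal{X})}$. The only cosmetic difference is the order of presentation (the paper states the Laplace expansion first and the universal bound afterwards), and the paper leaves the choice of the row index $m$ realizing $\ell_\epsilon(\mathcal{X})$ implicit, whereas you make it explicit.
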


\begin{proof}
The proof uses similar arguments as for determinantal correlation
functionals. We present them in four steps: \medskip

\noindent \underline{Step 1:} Similar to determinants, Pfaffians have a
Laplace expansion with respect to any row of its matrix:
\begin{eqnarray}
\mathrm{Pf}\left[ \mathcal{G}_{\omega }\left( \left( \varphi
_{k},z_{k}\right) ,\left( \varphi _{l},z_{l}\right) \right) \right]
_{k,l=1}^{2N} &=&\sum_{n=1,n\neq m}^{2N}\left( -1\right) ^{m+n+1+\theta
\left( m-n\right) }\mathcal{G}_{\omega }\left( \left( \varphi
_{m},z_{m}\right) ,\left( \varphi _{n},z_{n}\right) \right)   \label{laplace}
\\
&&\times \mathrm{Pf}\left[ \mathcal{G}_{\omega }\left( \left( \varphi
_{k},z_{k}\right) ,\left( \varphi _{l},z_{l}\right) \right) \right]
_{\substack{ k\in \left\{ 1,\ldots ,2N\right\} \backslash \left\{ m\right\}
\\ l\in \left\{ 1,\ldots ,2N\right\} \backslash \left\{ n\right\} }}  \notag
\end{eqnarray}%
for any $\beta \in \mathbb{R}^{+}$, $N\in \mathbb{N}$, $m\in \left\{
1,\ldots ,2N\right\} $, $\varphi _{1},\ldots ,\varphi _{2N}\in \mathfrak{h}$
and $z_{1},\ldots ,z_{2N}\in \mathbb{S}_{\beta }$, where $\theta $ is the
Heaviside step function. Compare (\ref{laplace}) with (\ref{laplace0}%
).\medskip

\noindent \underline{Step 2:} Since $\rho _{\omega }$ is, by definition, a
quasi-free state, observe that%
\begin{equation}
\rho _{\omega }\left( \mathrm{B}\left( \varphi _{1}\right) \cdots \mathrm{B}%
\left( \varphi _{2N}\right) \right) =\mathrm{Pf}\left[ \rho _{\omega }\left(
\mathbb{O}_{\mathrm{id}_{k,l}}\left( \mathrm{B}(\varphi _{k}),\mathrm{B}%
(\varphi _{l})\right) \right) \right] _{k,l=1}^{2N},  \label{totot}
\end{equation}%
for all $N\in \mathbb{N}$ and $\varphi _{1},\ldots ,\varphi _{2N}\in
\mathfrak{h}$, where $\mathrm{id}_{k,l}$ is defined by (\ref{Opi2}), $\pi $
being the neutral element $\mathrm{id}$ of the permutation group $\mathcal{S}%
_{2N}$. See, e.g., \cite[Equations (6.6.9) and (6.6.10)]{EK98}. For any
permutation $\pi \in \mathcal{S}_{2N}$ ($N\in \mathbb{N}$), Equation (\ref%
{totot}) can be written as%
\begin{equation}
\rho _{\omega }%
\Big(%
\mathbb{O}_{\pi }\left( \mathrm{B}\left( \varphi _{1}\right) ,\ldots ,%
\mathrm{B}\left( \varphi _{2N}\right) \right)
\Big)%
=\mathrm{Pf}\left[ \rho _{\omega }\left( \mathbb{O}_{\pi _{k,l}}\left(
\mathrm{B}(\varphi _{k}),\mathrm{B}(\varphi _{l})\right) \right) \right]
_{k,l=1}^{2N},  \label{ass O0-00bis}
\end{equation}%
where $\mathbb{O}_{\pi }$ and the permutation $\pi _{k,l}$ are defined by (%
\ref{Opi1}) and (\ref{Opi2}), respectively. See, e.g., \cite[Proposition B.2]%
{FKT02}. Compare (\ref{ass O0-00bis}) with Lemma \ref{lemma exp copy(2)}%
.\medskip

\noindent \underline{Step 3:} Then, given $2N\in \mathbb{N}$ complex numbers
$z_{1},\ldots ,z_{2N}\in \mathbb{S}_{\beta }$ ($\beta \in \mathbb{R}^{+}$),
similar to (\ref{permutation}), we choose a permutation $\pi \in \mathcal{S}%
_{2N}$ such that, for any $k,l\in \left\{ 1,\ldots ,2N\right\} $, $k\neq l$,
\begin{equation*}
\pi \left( k\right) <\pi \left( l\right) \Leftrightarrow \left[ \mathrm{Im}%
(z_{k})<\mathrm{Im}(z_{l})\right] \vee \left[ \left( \mathrm{Im}(z_{k})=%
\mathrm{Im}(z_{l})\right) \wedge \left( k<l\right) \right] .
\end{equation*}%
Using the Hadamard three-line theorem (via Corollary \ref{determinant bound
copy(2)}), we thus obtain a universal bound on Pfaffians of the form%
\begin{equation}
\left\vert \mathrm{Pf}\left[ \rho _{\omega }\left( \mathbb{O}%
_{z_{k},z_{l}}\left( \tau _{z_{k}}\left( \mathrm{B}(\varphi _{k})\right)
,\tau _{z_{l}}\left( \mathrm{B}(\varphi _{l})\right) \right) \right)
_{k,l=1}^{2N}\right] _{k,l=1}^{N}\right\vert \leq
\prod\limits_{k=1}^{2N}\left\Vert \varphi _{k}\right\Vert _{\mathfrak{h}}
\label{laplace2}
\end{equation}%
for any $N\in \mathbb{N}$, $\varphi _{1},\ldots ,\varphi _{2N}\in \mathfrak{h%
}$ and $z_{1},\ldots ,z_{2N}\in \mathbb{S}_{\beta }$. To get this
inequality, we have used that
\begin{equation*}
\left\Vert \mathrm{B}(\varphi )\right\Vert _{\mathrm{CAR}(\mathfrak{h}%
)}=\left\Vert \varphi \right\Vert _{\mathfrak{h}},\qquad \varphi \in
\mathfrak{h}.
\end{equation*}%
Compare (\ref{laplace2}) with Lemma \ref{determinant bound}. \medskip

\noindent \underline{Step 4:} We infer from (\ref{laplace}) and (\ref%
{laplace2}) that
\begin{equation}
\left\vert \mathrm{Pf}\left[ \mathcal{G}_{\omega }\left( \left( \varphi
_{k},z_{k}\right) ,\left( \varphi _{l},z_{l}\right) \right) \right]
_{k,l=1}^{2N}\right\vert \leq \sum_{n=1,n\neq m}^{2N}\left\vert \mathcal{G}%
_{\omega }\left( \left( \varphi _{m},z_{m}\right) ,\left( \varphi
_{n},z_{n}\right) \right) \right\vert   \label{laplace3}
\end{equation}%
for any $\beta \in \mathbb{R}^{+}$, $N\in \mathbb{N}$, $m\in \left\{
1,\ldots ,2N\right\} $, $\varphi _{1},\ldots ,\varphi _{2N}\in \mathfrak{h}$
and $z_{1},\ldots ,z_{2N}\in \mathbb{S}_{\beta }$. By gauge invariance,
Condition \ref{conditon local} yield the inequality
\begin{equation}
\sum\limits_{x_{2}\in \mathbb{Z}^{d}:\left\vert x_{1}-x_{2}\right\vert
^{\epsilon }\geq R}\mathbb{E}\left[ \sup_{z_{1}z_{2}\in \mathbb{S}_{\beta
}}\max_{\substack{ p_{1},p_{2}\in \left\{ 0,1\right\}  \\ \sigma _{1},\sigma
_{2}\in \mathrm{S}}}\left\vert \mathcal{G}_{\omega }\left( \left( i^{p_{1}}%
\mathfrak{e}_{x_{1},\sigma _{1}},z_{1}\right) ,\left( i^{p_{2}}\mathfrak{e}%
_{x_{2},\sigma _{2}},z_{2}\right) \right) \right\vert \right] \leq 2D\,%
\mathrm{e}^{-\mu R}.  \label{laplace4}
\end{equation}%
Therefore, the assertion is a direct consequence of Inequalities (\ref%
{laplace3}) and (\ref{laplace4}).
\end{proof}

Theorem \ref{thm majorama} is a version of \cite[Theorem 1.4]{Sims-Warzel}
which holds true at any dimension $d\in \mathbb{N}$ and for any complex
times within the strip $\mathbb{S}_{\beta }$. A result similar to \cite[%
Theorem 1.3]{Sims-Warzel} for the many-point correlation functions of field
operators at fixed $\omega \in \Omega $, instead of an estimate for their
expectation values, easily follows by replacing Condition \ref{conditon
local} with a similar bound for a fixed $\omega \in \Omega $. We again omit
the details.

One observation in relation with \cite[Theorems 1.3 and 1.4]{Sims-Warzel} is
important to mention: For any disjoint partition $\mathcal{X}_{1},\mathcal{X}%
_{2}$ of $\mathcal{X}\subset \mathbb{Z}^{d}$, we deduce from (\ref{Hausdorf}%
) and (\ref{distance concentration}) that%
\begin{equation}
\ell _{\epsilon }(\mathcal{X})\leq \mathfrak{d}_{\epsilon }(\mathcal{X}_{1},%
\mathcal{X}_{2})\ .
\end{equation}%
By (\ref{hadamabis}) it follows that, for any disjoint partition $\mathcal{X}%
_{1},\mathcal{X}_{2}$ of $\mathcal{X}\subset \mathbb{Z}^{d}$ such that $|%
\mathcal{X}_{1}|=|\mathcal{X}_{2}|$,
\begin{equation*}
\ell _{\epsilon }(\mathcal{X})\leq \mathfrak{d}_{\epsilon }^{(S)}(\mathcal{X}%
_{1},\mathcal{X}_{2})\ .
\end{equation*}%
Note that the right hand side of the above inequality corresponds to \cite[%
Equation (1.27)]{Sims-Warzel} when $\mathcal{X}_{1}=\{x_{1},x_{3},\ldots
,x_{2N-1}\}\subset \mathbb{Z}^{d=1}$ and $\mathcal{X}_{2}=\{x_{2},x_{4},%
\ldots ,x_{2N}\}\subset \mathbb{Z}^{d=1}$ for $2N\in \mathbb{N}$ (different)
lattice points. Therefore, our notion of localization for Pfaffian
correlation functionals is weaker than the one used in \cite[Theorems 1.3
and 1.4]{Sims-Warzel}. Note, however, that, like $\ell _{\epsilon }(\mathcal{%
X})$, the quantity \cite[Equation (1.27)]{Sims-Warzel} stays small if the
points of $\mathcal{X}$ are packed in clusters containing exactly two points
$\{x_{k},x_{k+1}\}$, $k=1,3,\ldots ,2N$, independently of how far-apart from
each other the clusters are. Therefore, our notion of localization captures
qualitatively the behavior of the one used in \cite[Theorems 1.3 and 1.4]%
{Sims-Warzel}.

\section{Appendix: Log convexity of Multivariable Analytic Functions on
Tubes \label{Section Hadamard}}

Fix $\beta \in \mathbb{R}^{+}$. Let
\begin{equation*}
\mathfrak{T}_{1}\doteq \left\{ \xi \in \mathbb{C}:\mathrm{Im}\left\{ \xi
\right\} \in \left[ -\beta ,0\right] \right\} =\mathbb{S}_{\beta },
\end{equation*}%
(see (\ref{strip})) and $f:\mathfrak{T}_{1}\rightarrow \mathbb{C}$ be a
bounded continuous function. Define the map $B_{f}^{(1)}:\left[ -\beta ,0%
\right] \rightarrow \lbrack -\infty ,\infty )$ by%
\begin{equation*}
B_{f}^{(1)}(s)\doteq \ln \left( \sup_{t\in \mathbb{R}}\left\vert f\left(
t+is\right) \right\vert \right) .
\end{equation*}%
We use the convention $\ln 0\doteq -\infty $ and $0\cdot (-\infty )\doteq
-\infty $. Then, the Hadamard three-line theorem \cite[Theorem 12.3]{Simon}
states:

\begin{theorem}
\label{determinant bound copy(1)}\mbox{ }\newline
Let $\beta \in \mathbb{R}^{+}$ and $f:\mathfrak{T}_{1}\rightarrow \mathbb{C}$
be a bounded continuous function. If $f$ is holomorphic in the interior of $%
\mathfrak{T}_{1}$ then $B_{f}^{(1)}$ is a convex function.
\end{theorem}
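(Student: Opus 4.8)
The plan is to prove the equivalent \emph{three-line inequality}: writing $M(s)\doteq \sup_{t\in \mathbb{R}}|f(t+is)|$ (so that $B_{f}^{(1)}=\ln M$), it suffices to show that for all $-\beta \le s_{0}<s_{1}\le 0$ and every $s=(1-\theta )s_{0}+\theta s_{1}$ with $\theta \in \lbrack 0,1]$,
\begin{equation*}
M(s)\le M(s_{0})^{1-\theta }\,M(s_{1})^{\theta }.
\end{equation*}
Taking logarithms, this is precisely the assertion that $B_{f}^{(1)}$ lies below each of its chords, i.e.\ that it is convex. Since $f$ is bounded on $\mathfrak{T}_{1}$, the values $M(s_{i})$ are finite; the degenerate case $M(s_{i})=0$ can be absorbed by replacing $M(s_{i})$ with $M(s_{i})+\delta$ throughout and letting $\delta \downarrow 0$ at the end, so I may assume $M(s_{0}),M(s_{1})\in (0,\infty )$.

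First I would eliminate the two boundary bounds by a multiplicative normalization. Choose real constants $\kappa ,c$ solving $-\kappa s_{0}+c=\ln M(s_{0})$ and $-\kappa s_{1}+c=\ln M(s_{1})$, and set $\phi (z)\doteq \mathrm{e}^{i\kappa z+c}$. Then $\phi $ is entire and non-vanishing, and $|\phi (t+is)|=\mathrm{e}^{-\kappa s+c}$ is independent of $t$; by construction $|\phi (t+is_{0})|=M(s_{0})$ and $|\phi (t+is_{1})|=M(s_{1})$, while affinity of $-\kappa s+c$ in $s$ yields $|\phi (t+is)|=M(s_{0})^{1-\theta }M(s_{1})^{\theta }$ on the intermediate line. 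Consequently $G\doteq f/\phi $ is holomorphic in the interior of $\mathfrak{T}_{1}$, continuous and bounded on $\mathfrak{T}_{1}$ (because $|\phi |$ is bounded above and below over the compact range $s\in \lbrack -\beta ,0]$), and satisfies $|G|\le 1$ on both boundary lines $\mathrm{Im}=s_{0}$ and $\mathrm{Im}=s_{1}$.

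The crux is then to upgrade this boundary control to $|G|\le 1$ on the whole strip. The ordinary maximum modulus principle fails on the unbounded strip, so I would invoke a Phragm\'en--Lindel\"of argument: for $\epsilon >0$ put $G_{\epsilon }(z)\doteq G(z)\,\mathrm{e}^{-\epsilon z^{2}}$, and note that $|\mathrm{e}^{-\epsilon z^{2}}|=\mathrm{e}^{-\epsilon (t^{2}-s^{2})}$ decays as $|t|\to \infty $ uniformly for $s\in \lbrack -\beta ,0]$, while $G$ stays bounded. Applying the maximum modulus principle on the rectangles $[-T,T]\times \lbrack -\beta ,0]$ and letting $T\to \infty $, the contributions of the two vertical sides vanish, so $|G_{\epsilon }|\le \mathrm{e}^{\epsilon \beta ^{2}}$ throughout the strip (the bound coming from the horizontal sides, where $|G|\le 1$ and $s^{2}\le \beta ^{2}$); fixing $z$ and letting $\epsilon \to 0$ gives $|G(z)|\le 1$. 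This Phragm\'en--Lindel\"of step---in which the hypothesis that $f$ is bounded on $\mathfrak{T}_{1}$ is used in an essential way---is the main obstacle.

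Finally, $|G|\le 1$ reads $|f(t+is)|\le |\phi (t+is)|=M(s_{0})^{1-\theta }M(s_{1})^{\theta }$ for all $t\in \mathbb{R}$; taking the supremum over $t$ yields the three-line inequality, and hence the convexity of $B_{f}^{(1)}$.
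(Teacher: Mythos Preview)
The paper does not prove this theorem at all: it simply quotes it as the classical Hadamard three-line theorem, citing \cite[Theorem 12.3]{Simon}. Your proposal supplies the standard textbook proof (normalize by an exponential so that the boundary suprema equal $1$, then run a Phragm\'en--Lindel\"of argument with the mollifier $\mathrm{e}^{-\epsilon z^{2}}$), and the reasoning is correct in substance.

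One small slip to fix: after your normalization you know $|G|\le 1$ only on the two lines $\mathrm{Im}=s_{0}$ and $\mathrm{Im}=s_{1}$, not on $\mathrm{Im}=-\beta$ and $\mathrm{Im}=0$. Hence the rectangles in the maximum-modulus step should be $[-T,T]\times [s_{0},s_{1}]$ rather than $[-T,T]\times [-\beta ,0]$. With that correction the horizontal sides indeed give $|G_{\epsilon}|\le \mathrm{e}^{\epsilon s^{2}}\le \mathrm{e}^{\epsilon \beta ^{2}}$, the vertical sides vanish as $T\to\infty$, and the rest of the argument goes through unchanged.
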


This theorem has the following generalization to holomorphic functions in
several variables: For all $n\in \mathbb{N}$, let $K_{n}\subset \mathbb{R}%
^{n}$ be the simplex
\begin{equation*}
K_{n}\doteq \left\{ \left( s_{1},\ldots ,s_{n}\right) :s_{1},\ldots
,s_{n}\in \left[ -\beta ,0\right] ,\text{ }s_{1}+\cdots +s_{n}\geq -\beta
\right\}
\end{equation*}%
and define, for all $n\in \mathbb{N}$, the \textquotedblleft
tube\textquotedblright\
\begin{equation}
\mathfrak{T}_{n}\doteq \left\{ \left( \xi _{1},\ldots ,\xi _{n}\right) \in
\mathbb{C}^{n}:\left( \mathrm{Im}\left\{ \xi _{1}\right\} ,\ldots ,\mathrm{Im%
}\left\{ \xi _{n}\right\} \right) \in K_{n}\right\} .  \label{tubetueb}
\end{equation}%
Define further the map $B_{f}^{(n)}:K_{n}\rightarrow \lbrack -\infty ,\infty
)$ by%
\begin{equation*}
B_{f}^{(n)}(s_{1},\ldots ,s_{n})\doteq \ln \left( \sup_{\left( t_{1},\ldots
,t_{n}\right) \in \mathbb{R}^{n}}\left\vert f\left( t_{1}+is_{1},\ldots
,t_{n}+is_{n}\right) \right\vert \right)
\end{equation*}%
with $f:\mathfrak{T}_{n}\rightarrow \mathbb{C}$ being a bounded continuous
function. Then, we obtain the following corollary:

\begin{corollary}
\label{determinant bound copy(2)}\mbox{ }\newline
Let $\beta \in \mathbb{R}^{+}$, $n\in \mathbb{N}$ and $f:\mathfrak{T}%
_{n}\rightarrow \mathbb{C}$ be a bounded continuous function. If $f$ is
holomorphic in the interior of $\mathfrak{T}_{n}$ then $B_{f}^{(n)}$ is a
convex function.
\end{corollary}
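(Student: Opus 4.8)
The plan is to deduce the several--variable statement from the one--variable Hadamard three--line theorem (Theorem \ref{determinant bound copy(1)}) together with the elementary fact that a pointwise supremum of convex functions is convex. Since a function on the convex simplex $K_{n}$ is convex exactly when its restriction to every line segment contained in $K_{n}$ is convex, I would fix $s^{(0)},s^{(1)}\in K_{n}$ and $\eta \in [0,1]$ and aim at the single inequality
\begin{equation*}
B_{f}^{(n)}\big((1-\eta)s^{(0)}+\eta s^{(1)}\big)\leq (1-\eta)B_{f}^{(n)}(s^{(0)})+\eta B_{f}^{(n)}(s^{(1)}).
\end{equation*}

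The core step is to slice $f$ along complex lines and thereby reduce to one variable. Put $v\doteq s^{(1)}-s^{(0)}$ and, for each $w\in \mathbb{R}^{n}$, define the single--variable function $g_{w}(\zeta )\doteq f\big(w+\zeta v+is^{(0)}\big)$ on the strip $\{\zeta \in \mathbb{C}:0\leq \mathrm{Im}\,\zeta \leq 1\}$. Writing $\zeta =\xi +i\eta$, the argument of $f$ equals $(w+\xi v)+i\big((1-\eta)s^{(0)}+\eta s^{(1)}\big)$, so $g_{w}$ is bounded and continuous on the closed strip. Moreover, if $s^{(1)}\in \mathrm{int}(K_{n})$ then for $\eta \in (0,1)$ the imaginary part $(1-\eta)s^{(0)}+\eta s^{(1)}$ lies in $\mathrm{int}(K_{n})$ (the half--open segment joining any point of $K_{n}$ to the interior point $s^{(1)}$ stays interior), so $g_{w}$ is holomorphic in the open strip. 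Theorem \ref{determinant bound copy(1)}, whose conclusion is invariant under an affine reparametrisation of the strip, then shows that $\phi _{w}(\eta )\doteq \ln \big(\sup_{\xi \in \mathbb{R}}|g_{w}(\xi +i\eta )|\big)$ is convex on $[0,1]$. Since $w+\xi v$ runs through all of $\mathbb{R}^{n}$ as $(w,\xi )$ varies, one has $B_{f}^{(n)}\big((1-\eta)s^{(0)}+\eta s^{(1)}\big)=\sup_{w\in \mathbb{R}^{n}}\phi _{w}(\eta )$; as a supremum of convex functions, this is convex on the \emph{closed} interval $[0,1]$, which gives the desired inequality whenever \emph{at least one} endpoint lies in $\mathrm{int}(K_{n})$.

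The main obstacle is the remaining case in which both $s^{(0)},s^{(1)}\in \partial K_{n}$ and the segment between them lies in a face, where $g_{w}$ need not be holomorphic in the open strip. I would resolve this by a double approximation that uses only \emph{lower} semicontinuity of $B_{f}^{(n)}$ and never upper semicontinuity (which may genuinely fail because the supremum over the noncompact real directions can escape to infinity). Lower semicontinuity is immediate: for each fixed $t$ the map $s\mapsto |f(t+is)|$ is continuous, and a supremum of continuous functions is lower semicontinuous. Fix an interior point $y\in \mathrm{int}(K_{n})$ with $B_{f}^{(n)}(y)$ finite (the case $f\equiv 0$ being trivial), set $x_{i}^{\lambda }\doteq (1-\lambda )s^{(i)}+\lambda y\in \mathrm{int}(K_{n})$ for $\lambda \in (0,1]$, and note $m_{\lambda }\doteq (1-\eta)x_{0}^{\lambda }+\eta x_{1}^{\lambda }=(1-\lambda )\big((1-\eta)s^{(0)}+\eta s^{(1)}\big)+\lambda y$. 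Applying the inequality just proved first to the two interior endpoints $x_{0}^{\lambda },x_{1}^{\lambda }$ and then to each pair $(s^{(i)},y)$ yields
\begin{equation*}
B_{f}^{(n)}(m_{\lambda })\leq (1-\lambda )\big[(1-\eta)B_{f}^{(n)}(s^{(0)})+\eta B_{f}^{(n)}(s^{(1)})\big]+\lambda B_{f}^{(n)}(y).
\end{equation*}

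Letting $\lambda \to 0^{+}$, so that $m_{\lambda }\to (1-\eta)s^{(0)}+\eta s^{(1)}$, and invoking lower semicontinuity on the left--hand side, the term $\lambda B_{f}^{(n)}(y)$ drops out and the convexity inequality follows for arbitrary $s^{(0)},s^{(1)}\in K_{n}$. Since $B_{f}^{(n)}$ is bounded above by $\ln \Vert f\Vert _{\infty }$, all quantities are well defined with the conventions of Section \ref{Section Hadamard}, and the convexity of $B_{f}^{(n)}$ on $K_{n}$ is established.
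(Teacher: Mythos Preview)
Your argument is correct and follows the same route as the paper: restrict $f$ to a complex line so as to reduce to the one--variable Hadamard theorem (Theorem \ref{determinant bound copy(1)}), and then take the supremum over the real translations to recover $B_{f}^{(n)}$. You are in fact more careful than the paper, which simply asserts holomorphy of the one--variable restriction and does not separately treat the case where the segment $[s^{(0)},s^{(1)}]$ lies in a face of $K_{n}$; your interior--approximation step together with the lower semicontinuity of $B_{f}^{(n)}$ closes that point cleanly.
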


\begin{proof}
Fix all parameters of the corollary and assume that $f$ is holomorphic in
the interior of $\mathfrak{T}_{n}$. Take $(s_{1},\ldots ,s_{n})\in K_{n}$
and $(s_{1}^{\prime },\ldots ,s_{n}^{\prime })\in K_{n}$. For all $%
(t_{1},\ldots ,t_{n})\in \mathbb{R}^{n}$, define the function $%
F_{(t_{1},\ldots ,t_{n})}:\mathfrak{T}_{1}\rightarrow \mathbb{C}$ by%
\begin{equation*}
F_{(t_{1},\ldots ,t_{n})}\left( \xi \right) \doteq f\left(
t_{1}+i(s_{1}(1+\xi \beta ^{-1})-s_{1}^{\prime }\xi \beta ^{-1}),\ldots
,t_{n}+i(s_{n}(1+\xi \beta ^{-1})-s_{n}^{\prime }\xi \beta ^{-1})\right)
\text{ }.
\end{equation*}%
For all $\xi \in \mathfrak{T}_{1}$, note that%
\begin{equation*}
\left( t_{1}+i(s_{1}(1+\xi \beta ^{-1})-s_{1}^{\prime }\xi \beta
^{-1}),\ldots ,t_{n}+i(s_{n}(1+\xi \beta ^{-1})-s_{n}^{\prime }\xi \beta
^{-1})\right) \in \mathfrak{T}_{n}\text{ },
\end{equation*}%
by convexity of $K_{n}$. This function is bounded and continuous on $%
\mathfrak{T}_{1}$, and holomorphic in the interior of $\mathfrak{T}_{1}$.
Hence, by Theorem \ref{determinant bound copy(1)}, for all $\alpha \in
\lbrack 0,1]$,%
\begin{eqnarray}
\ln \left( \sup_{t\in \mathbb{R}}\left\vert F_{(t_{1},\ldots ,t_{n})}\left(
t-i\alpha \beta \right) \right\vert \right) &\leq &\alpha \ln \left(
\sup_{t\in \mathbb{R}}\left\vert F_{(t_{1},\ldots ,t_{n})}\left( t-i\beta
\right) \right\vert \right)  \label{mono} \\
&&+(1-\alpha )\ln \left( \sup_{t\in \mathbb{R}}\left\vert F_{(t_{1},\ldots
,t_{n})}\left( t\right) \right\vert \right) \text{ }.  \notag
\end{eqnarray}%
Since $\ln $ is a monotonically increasing, continuous function, for all $%
\alpha \in \lbrack 0,1]$,
\begin{eqnarray*}
B_{f}^{(n)}\left( \alpha s_{1}^{\prime }+(1-\alpha )s_{1},\ldots ,\alpha
s_{n}^{\prime }+(1-\alpha )s_{n}\right) &=&\ln \left( \sup_{(t_{1},\ldots
,t_{n})\in \mathbb{R}^{n}}\sup_{t\in \mathbb{R}}\left\vert F_{(t_{1},\ldots
,t_{n})}\left( t-i\alpha \beta \right) \right\vert \right) \\
&=&\sup_{(t_{1},\ldots ,t_{n})\in \mathbb{R}^{n}}\ln \left( \sup_{t\in
\mathbb{R}}\left\vert F_{(t_{1},\ldots ,t_{n})}\left( t-i\alpha \beta
\right) \right\vert \right) \text{ },
\end{eqnarray*}%
which, by (\ref{mono}), in turn implies that
\begin{equation*}
B_{f}^{(n)}\left( \alpha s_{1}^{\prime }+(1-\alpha )s_{1},\ldots ,\alpha
s_{n}^{\prime }+(1-\alpha )s_{n}\right) \leq \left( 1-\alpha \right)
B_{f}^{(n)}\left( s_{1},\ldots ,s_{n}\right) +\alpha B_{f}^{(n)}\left(
s_{1}^{\prime },\ldots ,s_{n}^{\prime }\right)
\end{equation*}%
for all $\alpha \in \lbrack 0,1]$.
\end{proof}

\bigskip \noindent \textbf{Acknowledgements: }This research is supported by
the\ FAPESP, the CNPq, the Basque Government through the grant IT641-13 and
the BERC 2014-2017 program, and by the Spanish Ministry of Economy and
Competitiveness MINECO: BCAM Severo Ochoa accreditation SEV-2013-0323 and
MTM2014-53850. We are very grateful to the BCAM and its management, which
supported this project via the visiting researcher program. Finally, we
thank S. Warzel for discussions.

\end{document}